\newcommand{\ketbra}[2]{\ket{#1}\!\bra{#2}}
\newcommand{\freeset}{\mathcal{F}}
\newcommand{\allset}{\mathcal{S}}
\newcommand{\trace}{\mathop{\mathrm{Tr}}}
\newcommand{\diag}{\mathop{\mathrm{diag}}}
\newcommand{\one}{\mathds{1}}
\newcommand{\implies}{\Rightarrow}
\newcommand{\sfrac}[2]{#1/#2}
\newtheorem{Thm}{Theorem}
\newtheorem{Def}{Definition}
\newtheorem{Crl}[Thm]{Corollary}
\newtheorem{Lem}[Thm]{Lemma}
\newtheorem{counterexample}{Counterexample}
\begin{document}

\title{Continuity of  robustness measures in quantum resource theories}

\author{Jonathan Schluck, Gl\'{a}ucia Murta, Hermann Kampermann, Dagmar Bru{\ss} and Nikolai Wyderka}

\address{Institut f\"ur Theoretische Physik III, Heinrich-Heine-Universit\"at D\"usseldorf, Universit\"atsstr.~1, D-40225 D\"usseldorf, Germany}

\ead{wyderka@hhu.de}

\begin{abstract}
Robustness measures are increasingly prominent resource quantifiers that have been introduced for quantum resource theories such as entanglement and coherence. Despite the generality of these measures, their usefulness is hindered by the fact that some of their mathematical properties remain unclear, especially when the set of resource-free states is non-convex.
In this paper, we investigate continuity properties of different robustness functions. We show that their continuity depends on the shape of the set of free states. In particular, we demonstrate that in many cases, star-convexity is sufficient for Lipschitz-continuity of the robustness, and we provide specific examples of sets leading to non-continuous measures. 
Finally, we illustrate the applicability of our results by defining a robustness of teleportability and of quantum discord.

\end{abstract}

\maketitle

\section{Introduction}\label{intro}
Quantum systems exhibit ubiquitous non-classical features like entanglement, coherence and discord, which constitute resources for quantum information processing \cite{horodecki2009quantum, winter2016operational, streltsov2014quantum}. Over the recent years, the general theoretical framework of quantum resource theories has been developed \cite{chitambar2019quantum}. A quantum resource theory is defined by the set of its free elements (which do not exhibit the resource), the set of free operations (which do not increase the resource), and a quantifier of the resource. In order to quantify the amount of the resource, various different measures are used. A broadly applicable family of these quantifiers are robustness measures, which quantify the amount of noise tolerable before an element looses the property at hand \cite{RobOrig, GenRob}.
A part of the success of robustness measures lies in the fact that they have been linked to advantages of resource-carrying over resource-free elements for specific tasks like phase estimation \cite{napoli2016robustness} and channel discrimination \cite{takagi2019operational}. Recently, it has been shown that for each convex resource theory a discrimination task can be defined where the robustness exactly quantifies the advantage \cite{takagi2019operational}. Here, a quantum resource theory is referred to as convex, if its set of free elements is convex.

While robustness measures can be defined for any kind of resource, in the literature, they are mostly used for convex resource theories like entanglement, coherence and imaginarity. One reason for not using them in non-convex settings lies in the fact that even basic properties like continuity of the robustness measure are then not guaranteed anymore. This is in contrast to robustnesses defined w.r.t.~convex sets, where their continuity follows directly from convexity \cite{ConvexOpt}. However, this fact seems to be not widely known, which is probably the reason why we find discussions \cite{unique,Dagmar,Lami21} and several proofs for continuity of specific robustness measures in the literature \cite{unique, smooth, zheng2018generalized}, including a publication coauthored by one of the authors of the present work \cite{simnacher2019certifying}. 

The purpose of this paper is to investigate continuity features of robustness measures defined w.r.t.~different sets of resource-free elements. First, we introduce two different kinds of robustness measures used in the literature. We then review known continuity results for convex functions and improve them by showing that in many cases, these robustnesses are even Lipschitz-continuous. This stronger form of continuity can be used to obtain bounds on the difference of robustnesses of two elements from sole knowledge of their distances.

We then relax the condition of convexity to star-convexity and formulate conditions when these sets still lead to (Lipschitz-)continuous robustness measures.  Star-convexity is present in the resource-free sets of resource theories like discord \cite{ollivier2001quantum} and set coherence \cite{designolle2021set}.

Finally, we apply our results to teleportation fidelities and the theory of quantum discord by defining a global robustness of discord and calculate it analytically  for quantum states close to the set of Bell-diagonal states.

\section{Preliminaries and definitions}\label{sec:pre}

We start by defining the two main types of robustness measures that were previously considered in the literature. In analogy to the terminology used in the context of resource theories, we denote the non-resource carrying states as the free states $\freeset$, and by $\allset$ the set of all quantum states\footnote{Throughout this paper, we assume $\freeset$ and $\allset$ to be (subsets of) quantum states. In the context of resource theories, however, they can be more general, e.g., they can be sets of measurements or channels. Some of our results apply to that case as well.} of dimension $D$. Here, $D$ denotes the total dimension of the state space. In the context of entanglement and discord, it will be useful to also consider the local dimension of the subsystems, which we call $d_A$ and $d_B$, respectively, such that for bipartite systems $D=d_Ad_B$. If the dimensions coincide, we denote the local dimension by $d \equiv d_A = d_B$. For the following definitions of the robustness measures, we are using the nomenclature of \cite{chitambar2019quantum}.

\begin{Def}[Absolute Robustness \cite{RobOrig}] The \emph{absolute robustness} of a quantum state $\rho$ w.r.t.~a set of free states $\freeset$ is given by
\begin{eqnarray}\label{RD}
    R_\freeset(\rho) = \inf_{\sigma \in \freeset} \{ s\geq 0: \rho_s =  \frac{1}{s+1} (\rho + s \cdot \sigma) \in \freeset \}.
\end{eqnarray}
\end{Def}

Note that for the absolute robustness, Eq.~(\ref{RD}), the resilience  of a quantum state $\rho$ is evaluated by mixing it with a free state  $\sigma \in \freeset$. Therefore, if $\freeset$ is of zero measure, the absolute robustness might not be well-defined for many states. One simple example is the case of robustness of purity \cite{horodecki2003reversible, gour2015resource, streltsov2018maximal}, where the set of purity-free states only contains one state, namely the maximally mixed state. In this case, the function as defined in Eq.~(\ref{RD}) would never lead to a finite value for any state $\rho$ -- except of the maximally mixed state itself. That motivates the definition of the global robustness.

\begin{Def}[Global Robustness \cite{GenRob}] The \emph{global robustness} of a quantum state $\rho$ in a resource theory with set of free states $\freeset$ is given by
\begin{eqnarray}\label{genRob}
    R(\rho) = \inf_{\sigma \in \allset} \{ s\geq 0: \rho_s =  \frac{1}{s+1} (\rho + s \cdot \sigma) \in \freeset \}.
\end{eqnarray}
The global robustness is sometimes called generalized robustness.
\end{Def}

Note that there is a third type of robustness, namely the random robustness, where $\sigma$ is restricted to be the maximally mixed state. However, here we focus on the absolute and the global robustness.

Generally speaking, the robustness of a state is defined as the amount of (free) noise that a state tolerates before it could potentially loose all of its resources. These measures have very useful properties. Independently of the specific features of $\freeset$, they are faithful, meaning that
\begin{eqnarray}\label{eq:faithful}
R_{(\freeset)}(\rho) = 0 \Leftrightarrow \rho \in \freeset.
\end{eqnarray}
Another useful property of these measures is their monotonous behavior \cite{RoA}. Given free operations, i.e., a set of completely positive maps, $\{\Gamma_l\}_{l=1}^m$, such that $\sum_l \Gamma_l$ is trace preserving and none of the $\Gamma_l$ can map free states to resourceful states, strong monotonicity holds:
\begin{eqnarray}
    R_{(\freeset)}(\rho) \geq \sum_{l=1}^m \trace[\Gamma_l(\rho)] R_{(\freeset)}\left(\frac{\Gamma_l(\rho)}{\trace[\Gamma_l(\rho)]}\right).
\end{eqnarray}
Specifically, for $m=1$, one obtains the usual monotonicity, i.e., $R_{(\freeset)}(\rho) \geq  R_{(\freeset)}[C(\rho)]$ for all channels $C$ that map free states to free states.

In a quantum resource theory, the free set has typically particular properties, where convexity and star-convexity are probably the most common ones:

\begin{Def}[Convex set]\label{def:convex} A set  $S$ is called \emph{convex}, if any convex combination of two elements of the set $S$  yields another element in the set $S$, i.e.:
\begin{eqnarray}
\forall \rho_1, \rho_2 \in S \,:\, \delta \rho_1 + (1-\delta)\rho_2 \in S \;,  \delta\in [0,1].
\end{eqnarray}
\end{Def}

\begin{Def}[Star-convex set]\label{def:starconvex} A set  $S$ is called \emph{star-convex}, if there exists an element $\rho_0 \in S$, such that any convex combination of it with a state of the set $S$ yields another state in the set $S$, i.e.:
\begin{eqnarray}
    \exists \rho_0\, \forall \rho_1 \in S \,:\,\delta \rho_0 + (1-\delta)\rho_1\in S\;, \delta\in [0,1].
\end{eqnarray}
\end{Def}
For many examples of resource theories with star-convex sets of free states,  $\rho_0$ is given by the maximally mixed state $\rho_0= \frac{\mathds{1}}{D}$.
From Definitions \ref{def:convex} and \ref{def:starconvex}, it is clear that convexity implies star-convexity. 

Our goal is to investigate continuity properties of robustness measures. It is therefore useful to explicitly state the definitions of continuity and the stronger  Lipschitz-continuity for functions of quantum states. Throughout this paper, we will use the trace norm as distance measure between two quantum states via\footnote{Note that some authors define the trace distance of quantum states with an additional factor of $\frac12$.}  $\Vert\rho - \sigma\Vert_{\trace}$. The trace norm is defined as
\begin{eqnarray}
    \Vert X \Vert_{\trace} = \trace(\sqrt{X X^\dagger}), 
\end{eqnarray}
which corresponds to the sum of singular values of $X$. As we restrict ourselves to the finite-dimensional case, whether or not a function is (Lipschitz-) continuous does not depend on the specific choice of distance measure.

\begin{Def}[Continuity]\label{def:cont} A function  $f\,:\,\allset \rightarrow \mathbb{R}$ is \emph{continuous}, if
\begin{eqnarray}
\forall \epsilon>0\, \exists \delta>0:  \Vert \rho_1 - \rho_2\Vert_{\trace} < \delta \implies \vert f(\rho_1) - f(\rho_2) \vert < \epsilon,
\end{eqnarray}
\end{Def}

\begin{Def}[Lipschitz-continuity]\label{def:lipcont} A function  $f\,:\,\allset \rightarrow \mathbb{R}$ is \emph{Lipschitz-continuous} if there exists a constant $L$, called the Lipschitz-constant, such that
\begin{eqnarray}
\forall \rho_1,\rho_2: \vert f(\rho_1)-f(\rho_2)\vert < L \cdot \Vert \rho_1 - \rho_2 \Vert_{\trace},
\end{eqnarray}
\end{Def}
Note that if a function is Lipschitz-continuous with constant $L$, it is also Lipschitz-continuous with $L^\prime > L$. The smallest Lipschitz constant is called optimal. 

\section{Results for general continuity properties}

\begin{table}[t]
    \centering
\caption{Overview of general results of this paper. Here, $\mathcal{B}_\kappa(\sigma_0)$ denotes the closed ball of size $\kappa$ around $\sigma_0$, see Eq.~(\ref{eq:kappaball}), and $\lambda_\text{min}(\sigma_0)$ denotes the smallest eigenvalue of $\sigma_0$. L-continuous stands for Lipschitz-continuous.}
\begin{indented}
    \item[]\resizebox{0.83\textwidth}{!}{
    \begin{tabular}{rr||c|c}
 & & Absolute robustness $R_{\freeset}$  & Global robustness $R$ \tabularnewline
\hline 
\hline 
\multirow{2}{*}{} & \multirow{2}{*}{$\freeset$ convex} & \multicolumn{2}{c}{$\implies$ Continuous on interior of effective~domain}\tabularnewline
 &  & \multicolumn{2}{c}{(Theorem~\ref{convcontthm} \cite{ConvexOpt})}\tabularnewline
\hline 
\multirow{2}{*}{\includegraphics[width=3.8em]{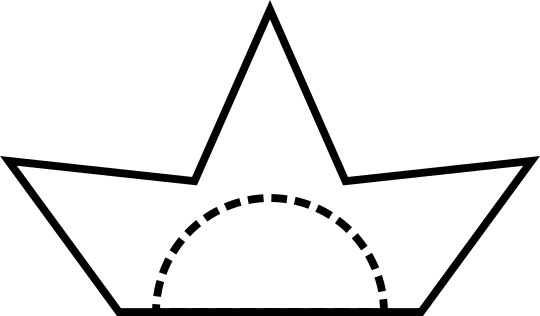}} & $\freeset$ star-convex w.r.t. & \multicolumn{2}{c}{$\implies$ L-continuous, $L=\sfrac{[1-\lambda_{\text{min}}(\sigma_{0})]}{\kappa}$}\tabularnewline
 & all states in $\mathcal{B}_{\kappa}(\sigma_{0})$ & \multicolumn{2}{c}{(Theorem~\ref{thm:kappaball})}\tabularnewline
\hline 
\multirow{2}{*}{\includegraphics[width=2.8em]{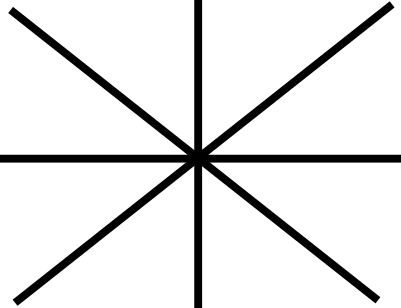}} & $\freeset$ star-convex w.r.t.  & $\nRightarrow$ continuity  & $\implies$ L-continuous, $L=\text{\ensuremath{\sfrac{1}{\lambda_{\text{min}}(\sigma_{0})}}}$\tabularnewline
 & $\sigma_{0}$ of full rank & (Counterexample~\ref{ce:ce1})  & (Theorem~\ref{thm:fullrank})\tabularnewline
\hline 
\multirow{2}{*}{} & \multirow{2}{*}{$\freeset$ general} & $\nRightarrow$ continuity  & $\nRightarrow$  continuity\tabularnewline
 &  & (Counterexample~\ref{ce:ce1})  & (Counterexample~\ref{ce:ce2})\tabularnewline
\end{tabular}
}
\end{indented}
    \label{tab:overview_results}
\end{table}

We are now ready to state our results concerning continuity of the robustness measures defined in Section~\ref{sec:pre}.
An overview of our results can be found in Table~\ref{tab:overview_results}.

We start by reviewing known results for convex free sets from the literature. To that end, we first mention the fact that robustnesses w.r.t~convex free sets $\freeset$ are convex functions themselves \cite{RoA}. In fact, the converse is true as well: If $\freeset$ is non-convex, then the robustnesses are non-convex functions:
    \begin{Thm}
     The absolute robustness  $R_\freeset(\rho)$ and the global robustness $R(\rho)$ are convex if and only if the set $\freeset$ is convex.\end{Thm}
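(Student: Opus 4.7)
The plan is to prove the two directions separately, with the ``if'' direction being a constructive mixing argument and the ``only if'' direction following almost immediately from faithfulness (Eq.~(\ref{eq:faithful})).

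For the ``if'' direction, suppose $\freeset$ is convex. Fix $\rho_1,\rho_2$ and $\lambda\in[0,1]$, and let $\rho=\lambda\rho_1+(1-\lambda)\rho_2$. Given any $s_i\geq R_\freeset(\rho_i)$ with witnesses $\sigma_i\in\freeset$ so that $\rho_{i,s_i}=(s_i+1)^{-1}(\rho_i+s_i\sigma_i)\in\freeset$, I would set $s=\lambda s_1+(1-\lambda)s_2$ and check algebraically that
\begin{equation}
\frac{\lambda(s_1+1)}{s+1}\,\rho_{1,s_1}+\frac{(1-\lambda)(s_2+1)}{s+1}\,\rho_{2,s_2}=\frac{1}{s+1}(\rho+s\sigma),
\end{equation}
where $\sigma=[\lambda s_1\sigma_1+(1-\lambda)s_2\sigma_2]/s$ (with the degenerate case $s=0$ handled separately, where $\rho\in\freeset$ follows trivially). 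The coefficients on the left-hand side are nonnegative and sum to $1$, so the left-hand side is in $\freeset$ by convexity, and for the absolute robustness $\sigma\in\freeset$ again by convexity; for the global robustness we only need $\sigma\in\allset$, which is automatic. Hence $R_{(\freeset)}(\rho)\leq\lambda s_1+(1-\lambda)s_2$, and taking infima over $s_1,s_2$ yields convexity of the robustness function.

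For the ``only if'' direction, suppose $\freeset$ is non-convex. Then by Definition~\ref{def:convex} there exist $\rho_1,\rho_2\in\freeset$ and $\lambda\in(0,1)$ such that $\rho_\lambda:=\lambda\rho_1+(1-\lambda)\rho_2\notin\freeset$. By the faithfulness property (\ref{eq:faithful}), which holds for both $R_\freeset$ and $R$ independently of any assumption on $\freeset$, we have $R_{(\freeset)}(\rho_1)=R_{(\freeset)}(\rho_2)=0$ while $R_{(\freeset)}(\rho_\lambda)>0$. Therefore
\begin{equation}
R_{(\freeset)}(\rho_\lambda)>0=\lambda R_{(\freeset)}(\rho_1)+(1-\lambda)R_{(\freeset)}(\rho_2),
\end{equation}
which directly contradicts convexity of the robustness function.

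I do not expect a real obstacle here: the forward direction is the classical ``perspective/infimal-convolution''-style argument for robustnesses, and the only mild care needed is to verify that the constructed $\sigma$ lies in the correct set ($\freeset$ for the absolute case, $\allset$ for the global case) and to treat $s=0$ separately. The reverse direction is essentially a one-line consequence of faithfulness, so the overall proof should be short.
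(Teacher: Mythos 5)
Your proof is correct. For the ``only if'' direction you argue exactly as the paper does: non-convexity of $\freeset$ produces $\rho_1,\rho_2\in\freeset$ whose mixture $\rho_\lambda$ lies outside $\freeset$, and faithfulness, Eq.~(\ref{eq:faithful}), gives $R_{(\freeset)}(\rho_\lambda)>0=\lambda R_{(\freeset)}(\rho_1)+(1-\lambda)R_{(\freeset)}(\rho_2)$. The difference is in the ``if'' direction: the paper does not prove it at all but simply cites Ref.~\cite{RoA}, whereas you supply the explicit perspective-style mixing argument. Your computation checks out: the coefficients $\lambda(s_1+1)/(s+1)$ and $(1-\lambda)(s_2+1)/(s+1)$ do sum to one, the resulting state is a convex combination of the two free states $\rho_{i,s_i}$, and the noise state $\sigma$ is a convex combination of $\sigma_1,\sigma_2$ (so it lies in $\freeset$ for the absolute robustness and trivially in $\allset$ for the global one); this is indeed the standard argument from the literature. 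Two small points worth making explicit if you write this up: since the infimum in Eqs.~(\ref{RD}) and (\ref{genRob}) need not be attained, you should take $s_i\leq R_{(\freeset)}(\rho_i)+\epsilon$ with corresponding witnesses and let $\epsilon\to 0$, rather than speaking of witnesses at $s_i=R_{(\freeset)}(\rho_i)$; and for the absolute robustness the case where some $R_\freeset(\rho_i)=\infty$ should be noted as trivial. Neither affects the validity of the argument.
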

    
\begin{proof}
    As the direction ``$\Leftarrow$'' has been shown in Ref.~\cite{RoA}, we only prove the direction ``$\Rightarrow$''.
    If a set $\freeset$ is not convex, then there exist at least two states $\rho_1,\rho_2 \in \freeset$ and a factor $\alpha \in [0,1]$, such that $\rho_3= \alpha \rho_1 + (1-\alpha) \rho_2$ is not in $\freeset$.
    Then both types of robustness function lead to $R(\rho_1)=R(\rho_2)=0 < R(\rho_3)$, where the inequality follows from faithfulness, Eq.~(\ref{eq:faithful}). This implies that $R(\rho)$ is not convex.
\end{proof}

As stated before, convex functions are known to be continuous under certain (mild) conditions.
A strong version of this fact can be found in Ref.~\cite{ConvexOpt}:

\begin{Thm}[\cite{ConvexOpt}\label{convcontthm}, Proposition 2.17] Every proper convex function $f$ on a finite-dimensional separated
topological linear space $X$ is continuous on the interior of its effective domain.
\end{Thm}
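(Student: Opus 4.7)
The plan is to follow the classical convex-analysis argument: show that $f$ is locally bounded above on a neighborhood of any interior point of its effective domain $\mathrm{dom}(f)$, upgrade this to a two-sided local bound via a reflection trick, and then deduce a local Lipschitz estimate. Since Lipschitz continuity trivially implies continuity, the theorem will follow.

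First, I would fix $x_0$ in the interior of $\mathrm{dom}(f)$. Because $X$ is finite-dimensional, I can choose affinely independent points $v_0,\dots,v_n \in \mathrm{dom}(f)$ spanning a simplex $S = \mathrm{conv}(v_0,\dots,v_n)$ that contains $x_0$ in its interior. Since $f$ is proper, $f(v_i)<+\infty$ for each $i$, and applying Jensen's inequality to any $x = \sum_{i} \lambda_i v_i \in S$ yields $f(x)\leq M := \max_i f(v_i)$. In particular, there exists $r>0$ with $B(x_0,r)\subseteq S$ on which $f \leq M$.

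Next, I would promote this to a two-sided bound by exploiting the reflection $y \mapsto y' = 2x_0 - y$, which maps $B(x_0,r)$ onto itself. Midpoint convexity gives $f(x_0) \leq (f(y) + f(y'))/2$, hence $f(y) \geq 2f(x_0) - M$ throughout $B(x_0,r)$. To extract a Lipschitz estimate, I write an arbitrary point in the ball as $y = x_0 + tu$ with $\lVert u \rVert = 1$ and $t\in[0,r)$, and consider the two convex combinations $y = (1 - t/r)\,x_0 + (t/r)(x_0 + ru)$ and $x_0 = (r/(r+t))\, y + (t/(r+t))(x_0 - ru)$. Applying convexity to each and inserting the bound $M$ gives $|f(y)-f(x_0)| \leq (t/r)(M - f(x_0))$, i.e., Lipschitz continuity at $x_0$ with constant $(M-f(x_0))/r$. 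Since $x_0$ was an arbitrary interior point, $f$ is continuous on the interior of $\mathrm{dom}(f)$.

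The main obstacle is essentially the first step: verifying that in a finite-dimensional separated topological linear space one can always enclose an interior point of a convex set by a small simplex whose vertices lie in the set. This relies on the fact that such a space is linearly homeomorphic to $\mathbb{R}^n$, so that algebraic and topological interiors coincide and standard Euclidean constructions apply. Once the simplex is in hand, the remaining manipulations with Jensen's inequality and convex combinations are elementary.
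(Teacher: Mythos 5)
Your proof is correct. Note that the paper does not actually supply its own proof of this statement: it is quoted as Proposition 2.17 of the cited convex-analysis reference, and the text explicitly defers to that reference. Your argument is the classical one that such references use in the finite-dimensional case (local boundedness above on a simplex via Jensen's inequality, a reflection step for the lower bound, and the two convex combinations $y=(1-t/r)x_0+(t/r)(x_0+ru)$ and $x_0=\frac{r}{r+t}y+\frac{t}{r+t}(x_0-ru)$ to get a pointwise Lipschitz estimate at $x_0$), so it genuinely fills a gap the paper leaves to the literature rather than diverging from an existing proof. Two small points of hygiene: take $r$ so that the \emph{closed} ball $\bar{B}(x_0,r)$ lies in the simplex $S$, since your Lipschitz step evaluates $f$ at the boundary points $x_0\pm ru$; and properness is what guarantees $f>-\infty$, which you implicitly use when subtracting function values -- worth stating once. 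The reflection bound $f(y)\geq 2f(x_0)-M$ is subsumed by the second convex combination and could be dropped.
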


Here, a \emph{proper convex} function is a convex function with non-empty domain that takes values in $\mathbb{R}\,\cup\, \{\infty\}$, but is not equal to $\infty$ on the whole domain. The effective domain is the subset of the domain where the function takes finite values. A separated topological linear space is a vector space defined over a topological space that is Hausdorff, i.e., for each pair of points in it, there exist non-overlapping open neighbourhoods around them. The usual Hilbert space over $\mathbb{C}$ that we work with is an example for such a space. A proof of Theorem~\ref{convcontthm} can be found in \cite{ConvexOpt}.

We now relax the assumptions of convexity of $\freeset$ for the two robustnesses in question. Let us consider the situation that $\freeset$ is star-convex w.r.t.~every state in a non-empty ball \begin{eqnarray}\label{eq:kappaball}
    B_\kappa(\sigma_0) := \{\rho \in \allset\,:\,\Vert\rho - \sigma_0\Vert_{\trace} \leq \kappa\}
\end{eqnarray} around some free state $\sigma_0$.

\begin{Thm}\label{thm:kappaball}
Let $\freeset$ be a subset of $\allset$ and $\sigma_0 \in \freeset$, such that there exists $\kappa > 0$ with $B_\kappa(\sigma_0) \subset \freeset$. If $\freeset$ is star-convex w.r.t.~each state in $B_\kappa(\sigma_0)$, then
\begin{enumerate}
    \item[(a)]  $R(\rho) \leq R_\freeset(\rho) \leq \frac{2(1-\lambda_\text{min}(\sigma_0))}{\kappa} - 1$,  where $\lambda_\text{min}(\sigma_0)$ denotes the smallest eigenvalue of $\sigma_0$;
    \item[(b)] $R$ and $R_\freeset$ are Lipschitz-continuous on $\allset$ with Lipschitz constant $L=\frac{1-\lambda_\text{min}(\sigma_0)}{\kappa}$.
\end{enumerate}
\end{Thm}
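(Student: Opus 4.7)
The inequality $R(\rho)\leq R_\freeset(\rho)$ is immediate from the definitions, since the feasible set in (\ref{genRob}) contains that of (\ref{RD}), so I focus on the remaining bounds. The main idea in both parts is that the ball $B_\kappa(\sigma_0)\subset\freeset$ provides a ``reservoir'' of free states usable as mixing partners.

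For (a), I would use $\sigma_0$ itself as the mixing state. Setting $\tau:=\lambda\rho+(1-\lambda)\sigma_0$ and choosing $\lambda=\kappa/\Vert\rho-\sigma_0\Vert_\trace$ forces $\Vert\tau-\sigma_0\Vert_\trace=\kappa$, so $\tau\in B_\kappa(\sigma_0)\subset\freeset$; the corresponding witness gives $R_\freeset(\rho)\leq 1/\lambda-1=\Vert\rho-\sigma_0\Vert_\trace/\kappa-1$. It then remains to establish the auxiliary bound $\Vert\rho-\sigma_0\Vert_\trace\leq 2(1-\lambda_{\min}(\sigma_0))$, which I would prove by writing the trace distance as $2\trace(X_+)$ with $X=\rho-\sigma_0$ and observing that $\trace(P_+X)=\trace(P_+\rho)-\trace(P_+\sigma_0)\leq 1-\lambda_{\min}(\sigma_0)$, since $\sigma_0\geq\lambda_{\min}(\sigma_0)\one$ and $P_+$ has rank at least $1$. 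The trivial case $\rho\in B_\kappa(\sigma_0)$, where $R_\freeset(\rho)=0$, is absorbed separately.

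For (b), I would compensate the perturbation $\rho_1-\rho_2$ using states from the ball. With $\delta:=\frac{1}{2}\Vert\rho_1-\rho_2\Vert_\trace$ and the Jordan decomposition $\rho_1-\rho_2=\Delta_+-\Delta_-$, I would form the density operators $\tilde\Delta_\pm:=\Delta_\pm/\delta$ (for $\delta>0$) and set
\[ \eta_\pm := (1-\lambda)\sigma_0+\lambda\tilde\Delta_\pm,\qquad \lambda:=\frac{\kappa}{2(1-\lambda_{\min}(\sigma_0))}. \]
By the auxiliary bound above, $\Vert\eta_\pm-\sigma_0\Vert_\trace\leq\lambda\cdot 2(1-\lambda_{\min}(\sigma_0))=\kappa$, hence $\eta_\pm\in B_\kappa(\sigma_0)$, and a direct calculation yields $\rho_1-\rho_2=(\delta/\lambda)(\eta_+-\eta_-)$. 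Starting from any (near-optimal) decomposition $\rho_2=(s_2+1)\tau_2-s_2\sigma_2$ with $\tau_2\in\freeset$ witnessing $R(\rho_2)$ (and with $\sigma_2\in\freeset$ in the absolute-robustness case), I would rearrange
\[ \rho_1=\bigl[(s_2+1)\tau_2+(\delta/\lambda)\eta_+\bigr]-\bigl[s_2\sigma_2+(\delta/\lambda)\eta_-\bigr]=A\tau_1-B\sigma_1, \]
where $A-B=1$ and $\tau_1,\sigma_1$ are the renormalized convex combinations of the bracketed terms. Star-convexity with respect to the centres $\eta_\pm\in B_\kappa(\sigma_0)$ ensures $\tau_1\in\freeset$ (and, for $R_\freeset$, also $\sigma_1\in\freeset$). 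Reading off $s_1=B=s_2+\delta/\lambda$, passing to the infimum in $s_2$, and swapping the roles of $\rho_1$ and $\rho_2$ yields the Lipschitz bound with constant $L=(1-\lambda_{\min}(\sigma_0))/\kappa$.

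The key subtlety is ensuring that $\eta_\pm$ really lie in $B_\kappa(\sigma_0)$: the crude estimate $\Vert\tilde\Delta_\pm-\sigma_0\Vert_\trace\leq 2$ would only give the weaker Lipschitz constant $1/\kappa$, so the sharpened trace-distance bound from part~(a) is essential for the advertised constant. A minor care point is that the infima defining $R$ and $R_\freeset$ need not be attained, so the construction must be applied to $\varepsilon$-optimal witnesses and the estimate closed by taking $\varepsilon\to 0$.
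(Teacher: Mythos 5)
Your proposal is correct and follows essentially the same route as the paper's proof: part (a) by mixing with $\sigma_0$ and bounding $\max_\rho\Vert\rho-\sigma_0\Vert_{\trace}$ by $2(1-\lambda_{\min}(\sigma_0))$, and part (b) by the Jordan decomposition of $\rho_1-\rho_2$, diluting the normalized parts into $B_\kappa(\sigma_0)$ with weight $\kappa/[2(1-\lambda_{\min}(\sigma_0))]$, and invoking star-convexity with respect to those centres (your $\eta_\pm$ are exactly the paper's $\rho_{a,b}(p_0)$). The handling of non-attained infima via $\varepsilon$-optimal witnesses matches the paper's closure remark.
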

\noindent Note that $\sigma_0$ might lie on the boundary of $\allset$. The proof of Theorem~\ref{thm:kappaball} is given in \ref{sec:proofthm3}.

An example is given by the robustness of entanglement, where it is known from Ref.~\cite{Kappaball} that a bipartite state $\rho$ is certainly separable, if  $\Vert \rho - \frac{\mathds{1}}{d_Ad_B} \Vert_{\trace} \leq \sqrt{\frac{1}{d_Ad_B(d_Ad_B-1)}}$. Thus, we can choose $\kappa = \frac1{\sqrt{d_Ad_B(d_Ad_B-1)}}$ and $\sigma_0 = \frac{\one}{d_Ad_B}$, and from Theorem~\ref{thm:kappaball} we get
\begin{eqnarray}
   L = \sqrt{\frac{(d_Ad_B-1)^3}{d_Ad_B}} \in \mathcal{O}(d_Ad_B).
\end{eqnarray}

Of course, the set of separable states fulfills more properties than the assumptions from Theorem~\ref{thm:kappaball} require, such that we can find a smaller Lipschitz constant for the absolute robustness. In fact, a much tighter dependence of $L$ on the local dimensions can be found.
\begin{Thm}\label{thm:roe}
Let $\freeset$ be the set of separable states in a bipartite system $\allset$. Then $R_\freeset$ is Lipschitz-continuous on $\allset$ with Lipschitz constant $L = \min(d_A,d_B)-\frac12$.
\end{Thm}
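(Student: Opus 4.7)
The plan is to reformulate the absolute robustness as an infimum over ``separable decompositions'' of Hermitian operators and then to invoke the Vidal--Tarrach bound on the maximal absolute robustness of entanglement to control the cost of modifying such a decomposition under a perturbation of the state. First I would rewrite the definition: $R_\freeset(\rho)=s$ means there exist separable states $\sigma,\tau$ with $\rho=(s+1)\tau-s\sigma$. Setting $P=(s+1)\tau$ and $N=s\sigma$ shows that
\begin{equation*}
R_\freeset(\rho)=\min\{\trace(N)\,:\,\rho=P-N,\ P,N\text{ separable positive operators}\},
\end{equation*}
the minimum being attained because the set of separable states is compact, and $\trace(P)=\trace(N)+1$ is automatic from $\trace(\rho)=1$. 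Second, I would use the Vidal--Tarrach formula \cite{RobOrig}: for a pure state $\ket{\psi}=\sum_{i=1}^{k}\sqrt{\lambda_i}\ket{ii}$ one has $R_\freeset(\ktbr{\psi})=(\sum_i\sqrt{\lambda_i})^2-1\leq k-1\leq \min(d_A,d_B)-1$ by Cauchy--Schwarz. Convexity of $R_\freeset$ on the convex set of separable states then extends this to $R_\freeset(\tau)\leq\min(d_A,d_B)-1$ for every $\tau\in\allset$.

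For the Lipschitz estimate, put $\epsilon=\Vert\rho_1-\rho_2\Vert_{\trace}$ and use the Jordan decomposition $\rho_1-\rho_2=\tfrac{\epsilon}{2}(\tau_+-\tau_-)$ with $\tau_\pm\in\allset$. Choose an optimal decomposition $\rho_2=P_2-N_2$ with $\trace(N_2)=R_\freeset(\rho_2)$, and separable decompositions $\tau_\pm=P_\pm-N_\pm$ with $\trace(N_\pm)\leq\min(d_A,d_B)-1$ from the previous step. Then
\begin{equation*}
\rho_1=[P_2+\tfrac{\epsilon}{2}(P_++N_-)]-[N_2+\tfrac{\epsilon}{2}(N_++P_-)],
\end{equation*}
and both brackets lie in the separable cone because that cone is closed under nonnegative sums. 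Using $\trace(P_-)=1+\trace(N_-)\leq\min(d_A,d_B)$ and $\trace(N_+)\leq\min(d_A,d_B)-1$, one obtains
\begin{equation*}
R_\freeset(\rho_1)\leq R_\freeset(\rho_2)+\tfrac{\epsilon}{2}[\trace(N_+)+\trace(P_-)]\leq R_\freeset(\rho_2)+\epsilon\bigl(\min(d_A,d_B)-\tfrac{1}{2}\bigr),
\end{equation*}
and exchanging $\rho_1\leftrightarrow\rho_2$ yields the Lipschitz bound with $L=\min(d_A,d_B)-\tfrac{1}{2}$.

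The only non-routine ingredient is the Vidal--Tarrach formula itself, which I would simply cite; everything else is bookkeeping inside the separable cone. The key conceptual observation is that the ``conversion cost'' between optimal separable decompositions of two nearby states is controlled by the \emph{maximum} absolute robustness of any state in $\allset$, which is why the sharper, linear-in-dimension constant $\min(d_A,d_B)-\tfrac{1}{2}$ arises here, rather than the much larger polynomial bound implied by Theorem~\ref{thm:kappaball} for the same setting.
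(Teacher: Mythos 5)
Your proof is correct and follows essentially the same route as the paper's: a Jordan decomposition $\rho_1-\rho_2=\tfrac{\epsilon}{2}(\tau_+-\tau_-)$, optimal separable pseudo-decompositions of $\rho_2$ and of $\tau_\pm$, and the Vidal--Tarrach bound $R_\freeset\leq\min(d_A,d_B)-1$ to control the added negative part, yielding the same constant $\min(d_A,d_B)-\tfrac12$. The only difference is cosmetic: you work in the unnormalized separable cone with $R_\freeset(\rho)=\min\trace(N)$, whereas the paper keeps normalized states and explicit weights $s_i$.
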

The proof is located in \ref{sec:proofthm4}.

The question arises, whether the requirement of the existence of a $\kappa$-ball around $\sigma_0$ in Theorem~\ref{thm:kappaball} can be removed. For the absolute robustness, the answer is negative, as the following counterexample shows.

\begin{figure}
    \centering
    \includegraphics[width=0.8\columnwidth]{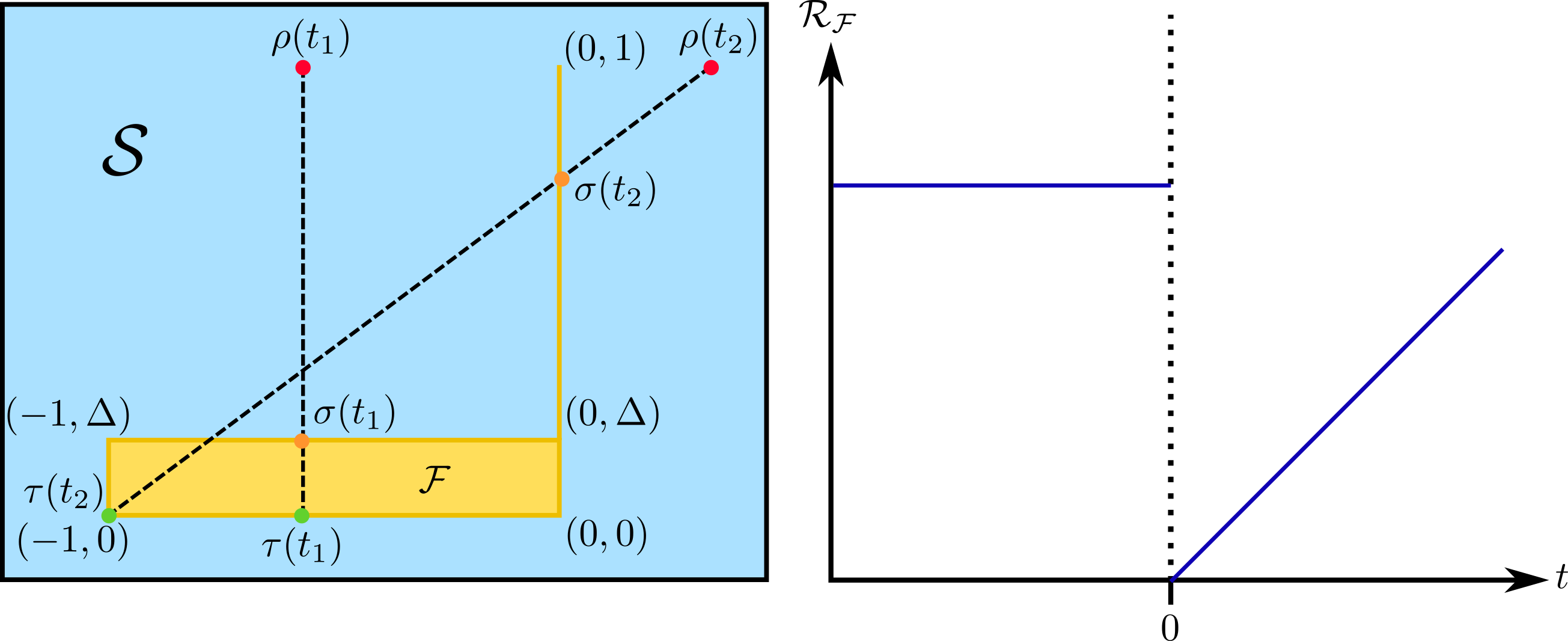}
    \caption{(Counterexample~\ref{ce:ce1}) A two-dimensional projection of a star-convex free set (orange) that leads to an absolute robustness that is non-continuous at the point $(0,1)$. The optimal free states for the choices of $\rho(t_1) = (t_1,1)$ to the left and $\rho(t_2) = (t_2,1)$ to the right of that point are displayed as well. See the text in Counterexample~\ref{ce:ce1} for more details.}
    \label{fig:counterexample1}
\end{figure}

\begin{counterexample}[Non-continuity of absolute robustness]\label{ce:ce1}
\emph{Consider the two-dimensional representation of a star-convex set of free states depicted in Fig.~\ref{fig:counterexample1}, such that the states are parameterized by two coordinates $(x,y)$. $\freeset$ consists of a line from $(0,0)$ to $(0,1)$ and a perpendicular strip of width $\Delta$, bounded by points $(0,0)$, $(-1,0)$, $(-1,\Delta)$ and $(0,\Delta)$. We consider the family of ``states'' $\rho(t) = (t,1)$ where $t\in \mathbb{R}$. In order to find the robustness of theses states, we have to find all points $\tau(t), \sigma(t) \in \freeset$, s.t. $\rho(t) + s\tau(t) = (1+s)\sigma(t)$.}

\emph{We start with the case $t_1<0$. Note that the robustness $s$ is the smallest attainable ratio between the distance between $\sigma$ and $\tau$, and the distance between $\rho$ and $\sigma$. Thus, in order to minimize $s$, it is best to choose $\sigma(t_1)$ on the top boundary of the stripe, and $\tau(t_1)$ on the bottom, such that a straight line connects $\rho(t_1)$, $\tau(t_1)$ and $\sigma(t_1)$.}

\emph{Each such choice for $\sigma(t_1)$ and $\tau(t_1)$ yields the same value for $s$, as they can be transformed into each other by shearing, which leaves the ratio between the distance between $\rho(t_1)$ and $\sigma(t_1)$, and the distance between $\sigma(t_1)$ and $\tau(t_1)$ the same, and therefore also $s$. We therefore choose $\sigma(t_1) = (t_1,\Delta)$ and $\tau(t_1) = (t_1, 0)$ and obtain $R_{\freeset}[\rho(t_1)] = \frac{1-\Delta}{\Delta}$ for $t_1 < 0$, independent of $t_1$.}

\emph{For $t_2\geq 0$, the optimal choice for $\sigma(t_2)$ now lies on the vertical line instead, whereas $\tau(t_2)$ sits on the left-hand side boundary of the strip. Again, due to shearing, all admissible pairs of $\tau(t_2)$ and $\sigma(t_2)$ yield the same $s$. We choose $\tau(t_2) = (-1,0)$ and $\sigma(t_2) = (0, \frac{1}{1+t_2})$, which yields $R_{\freeset}[\rho(t_2)] = t_2$ for $t_2 \geq 0$. Thus, the robustness is non-continuous at the point $(0,1)$.}
\end{counterexample}

Let us now turn to the case of the global robustness. As it turns out, the existence of a $\kappa$-ball in $\freeset$ is not required in this case.

\begin{Thm}\label{thm:fullrank}
Let $\freeset$ be a subset of $\allset$ that is star-convex w.r.t.~a fixed quantum state $\sigma_0$ of full rank, i.e., with smallest eigenvalue $\lambda>0$. Then the global robustness $R$ is Lipschitz-continuous on its effective domain with Lipschitz constant $L=\frac1\lambda$.
\end{Thm}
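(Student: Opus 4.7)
The plan is to establish the one-sided inequality $R(\rho_2) \leq R(\rho_1) + \|\rho_1 - \rho_2\|_{\trace}/\lambda$ for any $\rho_1$ in the effective domain and any $\rho_2 \in \allset$; symmetry under swapping $\rho_1$ and $\rho_2$ then upgrades this to Lipschitz-continuity with constant $L = 1/\lambda$ on the effective domain, and also shows that whenever $\rho_1$ lies in the effective domain, so does $\rho_2$.

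I would start by extracting, for any $\epsilon > 0$, an $\epsilon$-optimal feasible point of the global-robustness program at $\rho_1$: choose $\sigma_1 \in \allset$ and set $s_1 := R(\rho_1) + \epsilon$ such that $\tau_1 := (\rho_1 + s_1\sigma_1)/(s_1+1) \in \freeset$. Star-convexity of $\freeset$ w.r.t.\ $\sigma_0$ then ensures that $\tau_1^\mu := \mu \tau_1 + (1-\mu)\sigma_0 \in \freeset$ for every $\mu \in [0,1]$. I would next try the ansatz $\rho_2 + s_2 \sigma_2 = (s_2+1)\tau_1^\mu$ with the specific choice $\mu = (s_1+1)/(s_2+1) \in (0,1]$, for some $s_2 \geq s_1$ still to be fixed. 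Substituting the definition of $\tau_1$ collapses this equation to
\[
s_2 \sigma_2 = (\rho_1 - \rho_2) + s_1 \sigma_1 + (s_2 - s_1)\sigma_0.
\]

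The main obstacle is verifying that the right-hand side, divided by $s_2$, is a genuine element of $\allset$. The trace equals $s_2$ automatically, since $\rho_1 - \rho_2$ is traceless. Positivity is the crux: I would use $\sigma_1 \geq 0$ together with the key spectral input $\sigma_0 \geq \lambda\,\one$ (from the minimum-eigenvalue assumption on $\sigma_0$) to bound the right-hand side below by $(s_2 - s_1)\lambda\,\one - (\rho_2 - \rho_1)$. The Hermitian operator $\rho_2 - \rho_1$ satisfies $\rho_2 - \rho_1 \leq \lambda_{\max}(\rho_2 - \rho_1)\,\one \leq \|\rho_2 - \rho_1\|_{\trace}\,\one$, since the largest eigenvalue of a Hermitian operator is bounded by its operator norm, which is in turn dominated by its trace norm. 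Hence the choice $s_2 = s_1 + \|\rho_1 - \rho_2\|_{\trace}/\lambda$ makes the right-hand side positive semidefinite, producing a valid witness $\sigma_2 \in \allset$ and yielding $R(\rho_2) \leq s_2 = R(\rho_1) + \epsilon + \|\rho_1 - \rho_2\|_{\trace}/\lambda$. Letting $\epsilon \to 0$ and exchanging the roles of $\rho_1$ and $\rho_2$ completes the Lipschitz estimate.
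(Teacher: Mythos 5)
Your proposal is correct and is essentially the paper's own argument: both rest on the spectral bound $\rho_2-\rho_1\leq \Vert\rho_2-\rho_1\Vert_{\trace}\,\one\leq \frac{\Vert\rho_2-\rho_1\Vert_{\trace}}{\lambda}\,\sigma_0$ and on absorbing this perturbation into a star-convex mixture of the (near-)optimal free state with $\sigma_0$, followed by symmetrization. The only cosmetic difference is that the paper first passes to the equivalent form $R(\rho)=\inf_{\sigma\in\freeset}\{s\geq 0:\rho\leq(1+s)\sigma\}$, whereas you keep the explicit noise state $\sigma_2$ and verify its positivity and normalization directly (and handle non-attainment of the infimum via $\epsilon$-optimal points, which is slightly more careful than the paper).
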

The proof can be found in \ref{sec:proofthm5}.

\begin{figure}
    \centering
    \includegraphics[width=0.98\columnwidth]{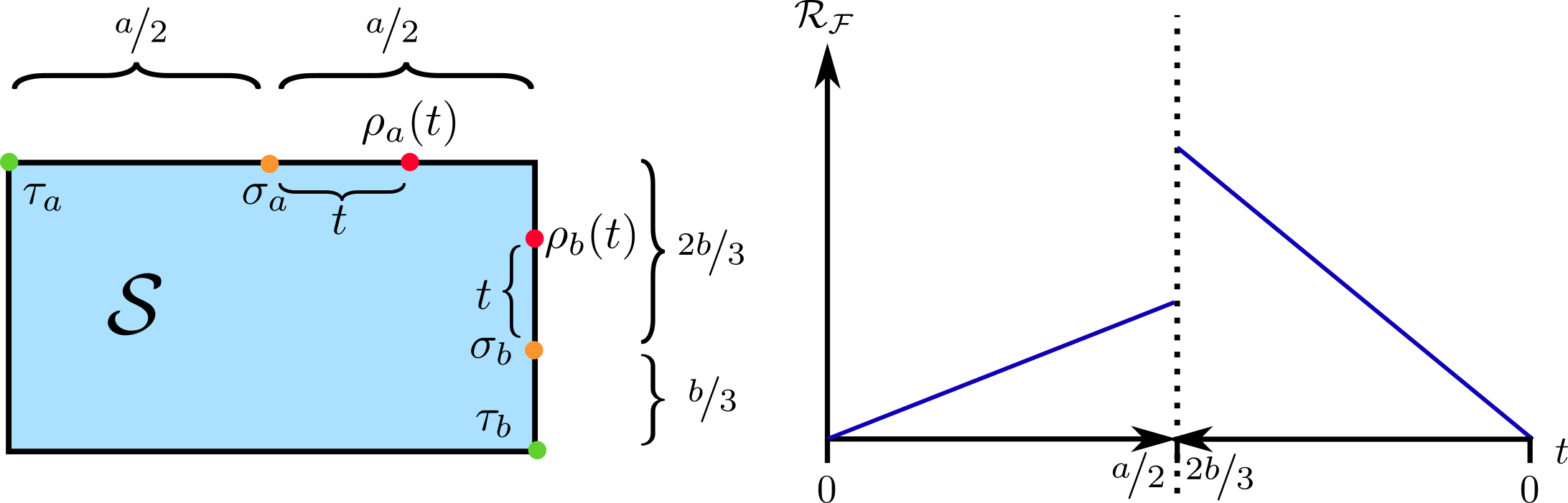}
    \caption{(Counterexample~\ref{ce:ce2}) A free set consisting of two points $\sigma_a$ and $\sigma_b$ at the boundary of the convex set $\allset$ that leads to a non-continuous global robustness on the boundary of $\allset$. On the interior, the robustness is infinite. For more details, see the text in Counterexample~\ref{ce:ce2}. }
    \label{fig:counterexample2}
\end{figure}

Several resource theories are based on sets that are star-convex w.r.t.~the maximally mixed state $\sigma_0=\frac{\mathds{1}}{D}$. This state has its smallest eigenvalue equal to $\lambda=\frac{1}{D}$. According to Theorem~\ref{thm:fullrank}, this leads to a Lipschitz constant of $L = D$ for the global robustness $R$. However, Theorem~\ref{thm:fullrank} is very general, and additional properties of $\freeset$, such as convexity, can lead to robustness measures with much lower optimal Lipschitz constants.

An immediate question that arises is whether the global robustness defined for a convex state space $\allset$ is continuous, regardless of the specific choice of the free set $\freeset$. The answer is negative, as the following counter-example demonstrates.

\begin{counterexample}[Non-continuity of global robustness]\label{ce:ce2}\theoremstyle{definition}
\emph{Consider the convex set as in Fig.~\ref{fig:counterexample2}, where the free set consists of two states, $\sigma_a$, lying on the center of a flat boundary of length $a$, and $\sigma_b$, lying off-center of a second flat boundary of length $b$, such that the two boundaries touch at some angle.}

\emph{Now, in the interior of $\allset$, the global robustness is clearly infinite, as no proper mixture of any two states can reach $\sigma_a$ or $\sigma_b$. On the two straight boundaries, however, such a decomposition is possible. The indicated family of states, $\rho_a(t)$, can be mixed with $\tau_a$ to reach $\sigma_a$, yielding $R[\rho_a(t)] = \frac{2t}{a}$, whereas family $\rho_b(t)$ has robustness $R[\rho_b(t)] = \frac{3t}{b}$. However, $\rho_a(\frac{a}{2}) = \rho_b(\frac{2b}{3})$, but $1=R[\rho_a(\frac{a}{2})] \neq R[\rho_b(\frac{2b}{3})]=2$.}
\end{counterexample}

\section{Examples of specific resource theories}

To conclude, we list as examples  resource theories together with properties of their free sets in Table~\ref{tab:2}. On top of the list, we have well-studied resource theories such as entanglement, purity, and coherence. The resource theories of set coherence, imaginarity and discord have also been previously introduced in the literature. Moreover, resource theories such as the resource of negativity can naturally be defined from the free set of states with positive partial transposition.

Recently, it was conjectured~\cite{yamasaki2022activation} and subsequently shown \cite{palazuelos2022genuine} that multiple copies of any multipartite state that is not separable w.r.t.~a fixed partition leads to genuine multipartite entanglement (GME). Thus, a resource theory where multi-copy GME is a resource would have the set of states separable w.r.t.~a fixed partition as a maximal free set, as listed in the table. This latter set was also recently identified as a subset of free states for conference key agreement protocols, i.e., states from which no conference key can be extracted \cite{carrara2021genuine}. Finally, we list the resource theory of teleportability, which we introduce in the next section. The global robustness is continuous for all the listed resources, since every one of them is based on a free set that is at least star-convex w.r.t.~the maximally mixed state.

In the following, we investigate two of the resources in more detail. First, we will use our results to introduce a measure of robustness of teleportability. Second, we construct a continuous global robustness w.r.t.~the star-convex free set of states of vanishing discord.

\subsection{Robustness of teleportability} \label{subsec:teleportability}

\begin{table}[t]
    \centering
     \caption{Examples of resource  theories and properties of their free sets.}
    \begin{indented}
    \item[]\resizebox{0.83\textwidth}{!}{
\begin{tabular}{l|l|l}
    Robustness of ... & set $\freeset$  & convexity? \\
    \hline
     entanglement \cite{RobOrig, GenRob}    & separable states         & convex    \\
     purity \cite{horodecki2003reversible, gour2015resource, streltsov2018maximal}          & the maximally mixed state   & convex    \\
     coherence w.r.t.~fixed basis \cite{napoli2016robustness, RoA}       &incoherent states        & convex    \\
     discord (Sec.~\ref{subsec:rod})    & states without discord   & star-convex    \\
     negativity & states with positive partial transpose& convex    \\
     set coherence \cite{designolle2021set}  & tuples of states that commute & star-convex\\
     fixed partition non-biseparability & states that are bisep.~w.r.t.~fixed bip. & star-convex\\
     imaginarity \cite{Hickey18}&real quantum states&convex\\
     teleportability (Sec.~\ref{subsec:teleportability})& unfaithful states & convex
    
\end{tabular}
}
\end{indented}
   
    \label{tab:2}
\end{table}
In a teleportation scheme, two parties use a pre-shared entangled state to send a single-party quantum state from one party to another by performing measurements on the state to be sent and one part of the entangled state, and communicating the measurement results classically. The maximal fidelity of the received state is determined by the quantity \cite{horodecki1999general, guhne2021geometry}
\begin{eqnarray}
    f_\text{max} = \frac{dF_\text{max} + 1}{d+1},
\end{eqnarray}
where $d$ denotes the local dimension of the entangled pair and
\begin{eqnarray}
    F_\text{max}(\rho) = \max_{U,V\in \text{SU}(d)} \braket{\phi^+|U^\dagger \otimes V^\dagger\rho U\otimes V |\phi^+}
\end{eqnarray} is the singlet fraction of $\rho$ with $\ket{\phi^+} = \frac1{\sqrt{d}}\sum_{i=0}^{d-1} \ket{ii}$. Here, $\rho$ is the $d^2$-dimensional entangled shared state, and the maximization is performed over all local unitary transformations. 

The best fidelity using classical protocols is given by $\frac2{d+1}$, implying that a specific shared state is useful for a teleportation task, whenever $F_\text{max}(\rho)> \frac1d$. This immediately suggests a definition of a set of $\freeset_\text{tel} = \{\rho \,:\, F_\text{max}(\rho) \leq \frac1d\}$ of useless states. We can thus define the robustness of teleportability as
\begin{eqnarray}\label{eq:rtel}
R_\text{tel}(\rho) = \inf_{\sigma \in \allset} \{ s\geq 0: \rho_s =  \frac{1}{s+1} (\rho + s \cdot     \sigma) \in \freeset_\text{tel} \}.
\end{eqnarray}

Using Theorem~\ref{thm:kappaball}, we get the following.
\begin{Crl}\label{crl:fidelity}
    Let $R_\text{tel}$ denote the global robustness of teleportability defined in Eq.~(\ref{eq:rtel}). Then
    \begin{eqnarray}
        \vert R_\text{tel}(\rho) - R_\text{tel}(\tau) \vert \leq (d+1)\Vert \rho - \tau \Vert_{\trace}
    \end{eqnarray}
    for all quantum states $\rho$ and $\tau$.
\end{Crl}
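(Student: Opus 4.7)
The natural strategy is to apply Theorem~\ref{thm:kappaball} to the free set $\freeset_\text{tel}$ with base point $\sigma_0 = \one/d^2$, since $\lambda_\text{min}(\sigma_0) = 1/d^2$ and the Lipschitz bound produced by that theorem is $(1-\lambda_\text{min}(\sigma_0))/\kappa$. For this to evaluate to the claimed $d+1$, the admissible radius must be $\kappa = (d-1)/d^2$, so the real task is to show that $\freeset_\text{tel}$ contains the trace-norm ball of that radius around the maximally mixed state and is star-convex with respect to each element of it.

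The star-convexity hypothesis comes essentially for free. Since $F_\text{max}$ is the pointwise supremum over $U,V\in\mathrm{SU}(d)$ of the linear functionals $\rho\mapsto\braket{\phi^+|(U^\dagger\otimes V^\dagger)\rho(U\otimes V)|\phi^+}$, it is convex in $\rho$, and hence its sublevel set $\freeset_\text{tel}=\{\rho\in\allset : F_\text{max}(\rho)\leq 1/d\}$ is convex. A convex set is star-convex with respect to each of its elements, so once the ball condition below is verified, the star-convexity hypothesis of Theorem~\ref{thm:kappaball} holds automatically; the membership $\sigma_0\in\freeset_\text{tel}$ is also immediate from $F_\text{max}(\one/d^2)=1/d^2\leq 1/d$.

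The heart of the argument is thus the inclusion $B_\kappa(\sigma_0)\subset\freeset_\text{tel}$. For $\rho$ with $\Vert\rho-\sigma_0\Vert_\trace\leq\kappa$, I would write $\rho=\sigma_0+\Delta$; then for any local unitaries $U,V$, cyclicity of the trace gives $\braket{\phi^+|(U^\dagger\otimes V^\dagger)\rho(U\otimes V)|\phi^+} = 1/d^2 + \trace[(U\otimes V)\ktbr{\phi^+}(U^\dagger\otimes V^\dagger)\Delta] \leq 1/d^2 + \Vert\Delta\Vert_\trace$, by the matrix H\"older inequality together with the observation that every local-unitary conjugate of the rank-one projector $\ktbr{\phi^+}$ has operator norm $1$. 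Maximising over $U,V$ yields $F_\text{max}(\rho)\leq 1/d^2 + \kappa$, which is bounded by $1/d$ precisely when $\kappa\leq (d-1)/d^2$. Substituting this value of $\kappa$ together with $\lambda_\text{min}(\sigma_0)=1/d^2$ into Theorem~\ref{thm:kappaball}(b) gives $L=(1-1/d^2)\cdot d^2/(d-1) = d+1$. The main obstacle is pinning down the optimal radius $\kappa$ through the H\"older estimate above; once that is done, everything else is an immediate application of Theorem~\ref{thm:kappaball}.
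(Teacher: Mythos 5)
Your proposal is correct and follows essentially the same route as the paper: both arguments establish the inclusion $B_{(d-1)/d^2}(\one/d^2)\subset\freeset_\text{tel}$ via the trace-norm/operator-norm duality (your H\"older step is exactly the paper's $\max_X|\trace[(\rho-\one/d^2)X]|/\Vert X\Vert_\infty$ characterization applied to the local-unitary conjugates of $\ktbr{\phi^+}$), and then read off $L=(1-1/d^2)\cdot d^2/(d-1)=d+1$ from Theorem~\ref{thm:kappaball}. Your explicit verification that $\freeset_\text{tel}$ is convex (hence star-convex w.r.t.\ every point of the ball) is a small point the paper leaves implicit, but it is not a different argument.
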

The proof is located in \ref{app:prooffidelity}.
Note that Theorem~\ref{thm:kappaball} also allows us to extract the upper bound of $R_\text{tel}(\rho) \leq 2d+1$ for all $\rho$.

\subsection{Robustness of discord} \label{subsec:rod}

Quantum discord is a non-symmetric measure of non-classicality of correlations in quantum systems, which coincides with entanglement for pure states, but has different features for mixed states. In fact, the quantum discord of a state vanishes exactly if there exists a measurement on one party that does not disturb the system. For a detailed description of discord, see Ref.~\cite{bera2017quantum} and references therein.

In its original definition, the discord of a bipartite quantum state $\rho_{AB}$ is defined as
\begin{eqnarray}
    \mathcal{D}_B(\rho_{AB}) = S(\rho_B) - S(\rho_{AB}) + \min_{\{\Pi_k^B\}} \sum_k p_k S(\rho_{A|k}),
\end{eqnarray}
where the minimization is performed over all projective measurements on Bob's system, and $p_k = \trace(\Pi_k^B \rho_{AB})$, $\rho_{A|k} = \Pi_k^B \rho_{AB} \Pi_k^B / p_k$.

Due to the minimization, evaluating the discord for a general quantum state is hard and was so far only accomplished for certain two-qubit states. Thus, other measures for quantum discord were proposed. However, the non-convexity of the set of zero discord states complicates the task of defining such measures. Here, we define the global robustness of discord. We start by pointing out that the set of zero discord states is star-convex w.r.t.~the maximally mixed state. To that end, note that it has been shown that states with vanishing discord are the so-called quantum-classical states, given by
\begin{eqnarray}
    \sigma_{AB} = \sum_{i=1}^{r} p_i \rho_i \otimes \ketbra{\psi_i}{\psi_i},
\end{eqnarray}
where the $\ket{\psi_i}$ are orthogonal \cite{bera2017quantum}. In case $r < d$, we extend the set of states to a complete basis of Bob's Hilbert space, such that $\one_B = \sum_{i=1}^{d} \ketbra{\psi_i}{\psi_i}$. A mixture of $\sigma_{AB}$ with the maximally mixed state yields then
\begin{eqnarray}
    (1-p)\sigma_{AB} + \frac{p}{d^2}\one_A\otimes \one_B = \sum_{i=1}^r [(1-p)p_i\rho_i& + \frac{p}{d^2}\one_A] \otimes \ketbra{\psi_i}{\psi_i}+ \nonumber \\
    & + \sum_{i=r+1}^d \frac{p}{d^2} \one_A \otimes \ketbra{\psi_i}{\psi_i},
\end{eqnarray}
which is again a quantum-classical state.
While the set $\freeset$ is star-convex, it lacks a $\kappa$-ball around the maximally mixed state. This can be seen by exhibiting the discord of two-qubit Werner states, $\rho_{AB} = (1-p)\ketbra{\psi^-}{\psi^-} + \frac{p}{d^2}\one_{AB}$, which has nonzero discord whenever $p>0$ \cite{ollivier2001quantum}.
Thus, we define a global robustness measure for discord,

\begin{figure}
    \centering
    \includegraphics[width=0.6\columnwidth]{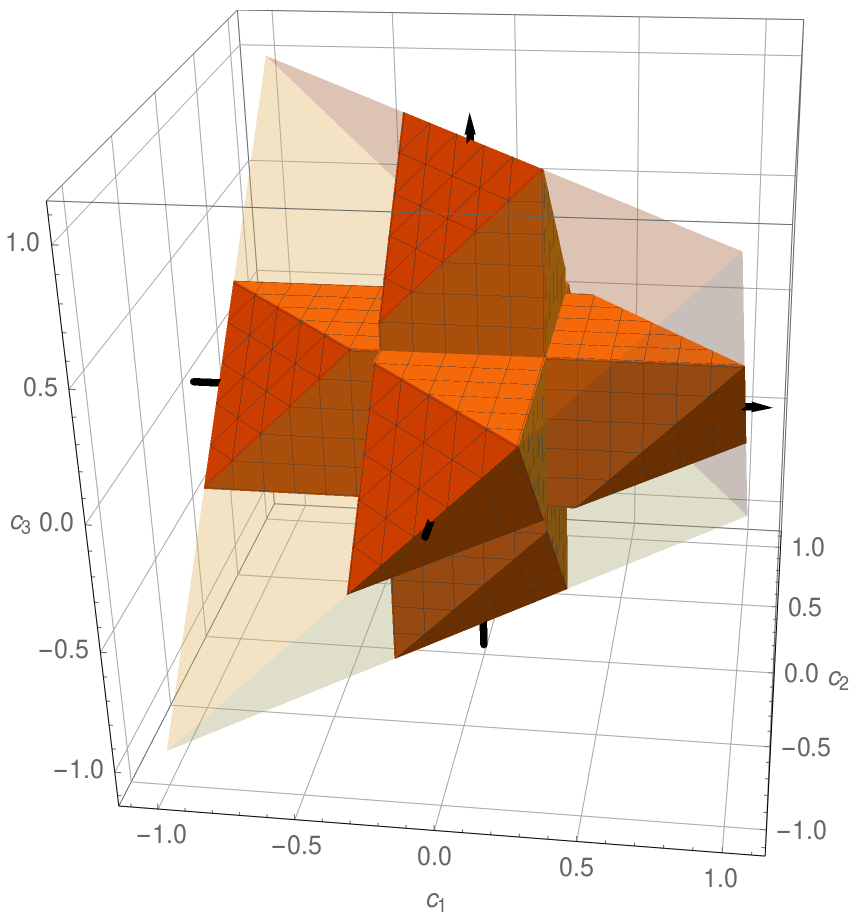}
    
    \caption{The tetrahedron of Bell-diagonal states (transparent yellow) and the subset of Bell-diagonal states with robustness of discord of less than or equal to $0.3$ (solid orange). The states of zero discord are located on the coordinate axes, displayed here in solid black.}
    \label{fig:rod}
\end{figure}

\begin{eqnarray}
    R_{\mathcal{D}_B}(\rho) := \inf_{\sigma\in \allset}\{s\geq 0 : \frac{1}{1+s}(\rho + s\sigma) \in \freeset_{\mathcal{D}_B}\},
\end{eqnarray}
where $\freeset_{\mathcal{D}_B}$ denotes the set of zero discord states. According to Theorem~\ref{thm:fullrank}, this function is Lipschitz-continuous with Lipschitz constant bounded by $L=d^2$. Note that evaluation of $R_{\mathcal{D}_B}$ is still complicated, as it is hard to decide whether a given state is quantum-classical or not. However, one can replace this condition by an SDP hierarchy, giving a (converging) outer approximation of the set $\freeset_{\mathcal{D}_B}$ \cite{piani2016hierarchy}. With this hierarchy, we could efficiently find lower bounds on $R_{\mathcal{D}_B}(\rho)$. In the case of Bell-diagonal two-qubit states, however, we can analytically calculate the robustness. These states are given by the three-parameter family
\begin{eqnarray}
    \rho_{\text{BDS}} = \frac14(\one\otimes\one + \sum_{i=1}^3 c_i \sigma_i \otimes \sigma_i),
\end{eqnarray}
where $\sigma_1, \sigma_2, \sigma_3$ are the usual Pauli matrices, and positivity of $\rho_{\text{BDS}}$ demands
\begin{eqnarray}
    1-c_1-c_2-c_3 \geq 0,&\quad 1-c_1+c_2+c_3 \geq 0,\\
    1+c_1-c_2+c_3 \geq 0,&\quad 1+c_1+c_2-c_3 \geq 0,
\end{eqnarray}
giving rise to the tetrahedron displayed transparently in Fig.~\ref{fig:rod}. For this set, the discord can be calculated analytically, and by using Lagrange multipliers it follows immediately that it vanishes iff two of the $c_i$ vanish \cite{luo2008quantum}. In other words, the free set consists of the union of the three coordinate axes. In order to perform the optimization, we will use the equivalent formulation \cite{RoA}
\begin{eqnarray}\label{eq:robdisalt}
    R_{\mathcal{D}_B}(\rho) = \inf_{\sigma \in \freeset} \{s\geq 0 : \rho \leq (1+s)\sigma\}.
\end{eqnarray}
However, a priori it is not clear whether we can restrict the optimization to the set of Bell-diagonal free states. We show this in the following Lemma, which is proven in \ref{app:belldiagonal}.

\begin{Lem}\label{lem:belldiagonal}
If $\rho$ is a Bell diagonal state, then $R_{\mathcal{D}_B}(\rho)$ can be computed, via Eq.~(\ref{eq:robdisalt}),  by restricting the infimum to Bell diagonal  free states.
\end{Lem}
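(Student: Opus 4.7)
The plan is to reduce the infimum over all quantum-classical (QC) states to one over Bell-diagonal free states via a twirling argument that exploits the symmetry of $\rho$. Let $\sigma \in \freeset_{\mathcal{D}_B}$ be any QC state with $\rho \leq (1+s)\sigma$. Since $\rho$ is Bell diagonal it is stabilized by every element of the Klein four-group $G = \{\one\otimes\one,\,\sigma_1\otimes\sigma_1,\,\sigma_2\otimes\sigma_2,\,\sigma_3\otimes\sigma_3\}$. Each $g \in G$ is a local unitary, so $g\sigma g^\dagger$ remains QC, and conjugating both sides yields $\rho \leq (1+s)\,g\sigma g^\dagger$ for every $g$. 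Averaging, the twirl $\bar\sigma = \frac{1}{|G|}\sum_{g\in G} g\sigma g^\dagger$ is Bell diagonal and still satisfies $\rho \leq (1+s)\bar\sigma$.

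A short calculation in the Pauli basis shows that $\bar\sigma$ retains only the diagonal entries of $\sigma$'s correlation tensor $T_{ij} = \trace(\sigma\,\sigma_i\otimes\sigma_j)$. For a two-qubit state that is QC on $B$ with measurement-basis Bloch vector $\vec n$, this tensor factorizes as $T_\sigma = \vec q\,\vec n^T$ (rank one), so $\bar\sigma$ is Bell diagonal with parameters $\bar c_i = q_i n_i$.

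Because $\rho$ and $\bar\sigma$ are now simultaneously diagonal in the Bell basis, the inequality $\rho \leq (1+s)\bar\sigma$ reduces to four scalar inequalities between their Bell-basis eigenvalues. The remaining step is to show that, under the rank-one constraint on $T_\sigma$, these inequalities admit a Bell-diagonal free solution of the form $\sigma'' = \frac{1}{4}(\one + c\,\sigma_j\otimes\sigma_j)$ for some axis $j \in \{1,2,3\}$ and scalar $c$. I would pick $j$ along the dominant component of $\vec n$ and tune $c$ so that the two binding constraints --- the ones associated with the $\pm 1$-eigenspaces of $\sigma_j\otimes\sigma_j$ in the Bell basis --- bind simultaneously, which fixes $c$ as an explicit rational function of $\vec c$ and $s$.

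This last step is the main obstacle. Without the QC origin of $\bar\sigma$, a general Bell-diagonal dominator of $\rho$ could give $s = 0$ (for instance by taking $\bar\sigma = \rho$), so the Bell-diagonal free restriction would not be lossless; the rank-one structure $\bar c_i = q_i n_i$, together with the positivity of $\sigma$ which constrains $\vec q$ in terms of $\vec n$, is what must be leveraged. The technical heart of the argument is to show that the off-axis components $q_i n_i$ with $i\neq j$ are collectively dominated, in the Bell-eigenvalue inequalities, by the on-axis contribution --- forcing the existence of a Bell-diagonal free state matching the same $s$ as $\sigma$ and yielding $\inf_{\sigma\in\freeset_{\mathcal{D}_B}} s = \inf_{\sigma\in\freeset_{\mathcal{D}_B}\cap\text{BD}} s$.
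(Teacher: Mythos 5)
There is a genuine gap, and you have correctly located it yourself: the step you call ``the main obstacle'' is the entire content of the lemma, and your twirling reduction actually makes it harder rather than easier. The Klein-four twirl $\bar\sigma = \frac14\sum_g g\sigma g^\dagger$ is Bell diagonal and dominates $\rho$, but it is \emph{not} a free state: the zero-discord set is non-convex, and since a QC state has rank-one correlation tensor $T = k\,\vec u\,\vec v^{\,T}$, the twirl produces $\bar c_i = k u_i v_i$, which generically has all three components nonzero, whereas free Bell-diagonal states live on the coordinate axes. Worse, the twirl strictly weakens the spectral information you need. Take $\vec u = \vec v = \frac{1}{\sqrt3}(1,1,1)$: then $\bar\sigma$ has eigenvalues $\frac{1-k}{4}$ (once) and $\frac{1+k/3}{4}$ (three times), so $\rho \leq (1+s)\bar\sigma$ only gives $p_2 \leq \frac{(1+s)(1+k/3)}{4}$ for the second-smallest Bell weight of $\rho$, while exhibiting a single-axis free state with the same $s$ requires the much stronger $p_2 \leq \frac{(1+s)(1-k)}{4}$. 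So the existence of your $\sigma'' = \frac14(\one + c\,\sigma_j\otimes\sigma_j)$ cannot be extracted from $\bar\sigma$ alone; you would have to go back to the untwirled $\sigma$ and carry its full spectrum through the argument.

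That is what the paper does instead. It first strips the marginals without leaving the free set by averaging $\sigma$ with its universal state inversion $\tilde\sigma = (\sigma_2\otimes\sigma_2)\sigma^T(\sigma_2\otimes\sigma_2)$ (this preserves both freeness and the domination of $\rho$, since Bell-diagonal $\rho$ is inversion-invariant), obtaining a free state $\hat\sigma$ with maximally mixed marginals and correlation matrix $k\,\vec u\,\vec v^{\,T}$. This $\hat\sigma$ is locally unitarily equivalent to the Bell-diagonal free state $\hat\sigma_1 = \frac14(\one\otimes\one \pm k\,\sigma_a\otimes\sigma_a)$, which has the \emph{same} doubly degenerate spectrum $\{\frac{1\pm k}{4}\}$. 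The key move you are missing is then a spectral comparison: Weyl's inequalities applied to $(1+s)\hat\sigma - \rho \geq 0$ yield $p_{1,2} \leq \frac{(1+s)(1-k)}{4}$ and $p_{3,4} \leq \frac{(1+s)(1+k)}{4}$ for the ordered Bell weights of $\rho$, and choosing the axis and sign of $\hat\sigma_1$ so that its two larger eigenvalues sit on the Bell states carrying $p_3, p_4$ makes $(1+s)\hat\sigma_1 - \rho$ diagonal in the Bell basis with exactly those four nonnegative entries. This produces a free Bell-diagonal feasible point at the same $s$, which is the conclusion your sketch does not reach.
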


This allows us to perform the optimization in Eq.~(\ref{eq:robdisalt}) and, by using Theorem~\ref{thm:fullrank}, we can bound the robustness of discord for other states as well:
\begin{Crl}\label{crl:rod}
    Let $\rho = \frac14(\one \otimes \one + \sum_{i=1}^3 x_i \sigma_i \otimes \one + \sum_{i=1}^3 y_j \one \otimes \sigma_j + \sum_{i,j=1}^3 T_{ij}\sigma_i\otimes\sigma_j)$. Then the robustness of discord $R_{\mathcal{D}_B}(\rho)$ is bounded by
    \begin{eqnarray}
        \vert c_2 \vert - 4\max(\vert x\vert, \vert y\vert) \leq R_{\mathcal{D}_B}(\rho) \leq \vert c_2\vert + 4\max(\vert x\vert, \vert y\vert),
    \end{eqnarray}
    where $\vert c_2\vert$ is the singular value of the correlation matrix $T$ with the second largest magnitude and $\vert x\vert$ and $\vert y \vert$ denote the vector norms of the local Bloch vectors of $\rho_A$ and $\rho_B$, respectively.
    
    In particular, if $\rho_\text{BDS}$ is Bell-diagonal, then
    \begin{eqnarray}
        R_{\mathcal{D}_B}(\rho_\text{BDS}) = \vert c_2 \vert.
    \end{eqnarray}
\end{Crl}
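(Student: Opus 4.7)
The plan is to establish the exact Bell-diagonal value first, then obtain the general bound via the Lipschitz estimate of Theorem~\ref{thm:fullrank}. A useful preliminary reduction is that $R_{\mathcal{D}_B}$ is invariant under local unitaries $U_A\otimes U_B$, since such transformations preserve the set of quantum-classical states with respect to~$B$. Combined with the (suitably signed) singular-value decomposition of $T$, this lets me assume without loss of generality that $T$ is diagonal with entries $c_1,c_2,c_3$ satisfying $|c_1|\geq|c_2|\geq|c_3|$, without affecting $|x|$, $|y|$, or the value of $|c_2|$.

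For the Bell-diagonal case ($x=y=0$), Lemma~\ref{lem:belldiagonal} lets me restrict the optimization in Eq.~(\ref{eq:robdisalt}) to Bell-diagonal free states of the form $\sigma=\frac{1}{4}(\one\otimes\one+t\,\sigma_k\otimes\sigma_k)$ with $k\in\{1,2,3\}$ and $|t|\leq 1$. Since $\sigma$ and $\rho_{\mathrm{BDS}}$ are simultaneously diagonal in the Bell basis, the positivity of $(1+s)\sigma-\rho_{\mathrm{BDS}}$ reduces to four linear inequalities on its eigenvalues. Taking $k=1$ (the axis of largest $|c_k|$) and introducing $u=(1+s)t-c_1$, the four conditions collapse to $s-u\geq|c_2+c_3|$ and $s+u\geq|c_2-c_3|$, whose sum yields $2s\geq|c_2+c_3|+|c_2-c_3|=2|c_2|$. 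The explicit choice $u=\frac{1}{2}(|c_2-c_3|-|c_2+c_3|)$ saturates both inequalities and satisfies $|t|\leq 1$, as can be verified using $|u|=|c_3|$ and the positivity constraints on $c_1,c_2,c_3$. Choosing $k=2$ or $k=3$ instead yields the strictly worse bound $s\geq|c_1|$, so the infimum is exactly $R_{\mathcal{D}_B}(\rho_{\mathrm{BDS}})=|c_2|$. Translating the positive semi-definiteness into these two linear inequalities and checking admissibility of $t$ is the main technical step.

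For a general two-qubit state $\rho$, let $\rho_{\mathrm{BDS}}$ denote the Bell-diagonal state obtained by zeroing out the local Bloch vectors. The difference $\rho-\rho_{\mathrm{BDS}}=\frac{1}{4}(\vec{x}\cdot\vec{\sigma}\otimes\one+\one\otimes\vec{y}\cdot\vec{\sigma})$ is a sum of two commuting Hermitian operators, since they act on disjoint subsystems, and therefore admits simultaneous diagonalization with joint eigenvalues $\frac{1}{4}(\epsilon_1|x|+\epsilon_2|y|)$ for $\epsilon_1,\epsilon_2\in\{\pm 1\}$. Summing their absolute values gives $\|\rho-\rho_{\mathrm{BDS}}\|_{\trace}=\max(|x|,|y|)$. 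Since $\freeset_{\mathcal{D}_B}$ is star-convex with respect to $\sigma_0=\one/4$ of smallest eigenvalue $\lambda=1/4$, Theorem~\ref{thm:fullrank} yields Lipschitz constant $L=4$, and combining with the exact Bell-diagonal value produces $|R_{\mathcal{D}_B}(\rho)-|c_2||\leq 4\max(|x|,|y|)$, which is equivalent to the two claimed inequalities.
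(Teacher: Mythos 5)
Your proposal is correct and follows essentially the same route as the paper: first the exact value $|c_2|$ for Bell-diagonal states via Lemma~\ref{lem:belldiagonal} and an axis-by-axis optimization over the free states $\frac14(\one\otimes\one+t\,\sigma_k\otimes\sigma_k)$, then the general bound from $\Vert\rho-\rho_{\mathrm{BDS}}\Vert_{\trace}=\max(|x|,|y|)$ together with the Lipschitz constant $L=4$ from Theorem~\ref{thm:fullrank}. The only (welcome) difference is that you carry out the Bell-diagonal optimization explicitly through the eigenvalue inequalities and verify feasibility of $t$, where the paper merely invokes ``Lagrange multipliers'' and a ``straight-forward calculation.''
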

For the proof, refer to \ref{app:proofrod}.

\section{Conclusion}

We investigated continuity properties of the absolute and global robustness measures for different types of free sets of quantum states. While it was known before that for convex free sets, the robustness functions are continuous, we gained new insights in two directions: first, we reduced the assumptions on the free set, allowing for star-convexity. Second, we proved a stronger form of continuity, namely Lipschitz-continuity, which allows to derive bounds on the difference of robustness measures of quantum states based only on their distance. This may be of practical relevance in experiments, where a limit on the amount of errors could be translated into a maximal deviation of the robustness.

We also showed that it is unlikely that the remaining restrictions on the free set can be reduced further without loosing the property of continuity, by providing explicit examples of sets which lead to non-continuous robustnesses.

Finally, we investigated two specific examples, namely the convex robustness of teleportability and the star-convex discord. For the former, we found a $\kappa$-ball to calculate the Lipschitz constant of the absolute and global robustness of teleportability. For the latter, we defined the robustness of discord, evaluated it explicitly for the case of Bell-diagonal states and gave bounds for general two-qubit states.

Future directions of research include the study of other robustness measures, like the random robustness, whose continuity features could be investigated. Additionally, a complete characterization of the features of the free set leading to continuous robustness measures would yield deeper insight in similarly defined measures.

\ack
We thank Thomas Wagner and Chau Nguyen for fruitful discussions. 
We acknowledge financial support by the QuantERA project QuICHE via the German Ministry of Education and Research (BMBF
Grant No.~16KIS1119K) and by the
Deutsche Forschungsgemeinschaft (DFG, German Research Foundation) under Germany’s Excellence Strategy - Cluster of Excellence Matter and Light for Quantum Computing (ML4Q) EXC 2004/1 - 390534769.
\section*{References}
 
\bibliographystyle{unsrt}
\bibliography{biblio}
\newpage

\appendix

\section{Proof of Theorem \ref{thm:kappaball}}\label{sec:proofthm3}

\setcounter{Thm}{2}
\begin{Thm}
Let $\freeset$ be a subset of $\allset$ and $\sigma_0 \in \freeset$, such that there exists $\kappa > 0$ with $B_\kappa(\sigma_0) \subset \freeset$. If $\freeset$ is star-convex w.r.t.~each state in $B_\kappa(\sigma_0)$, then
\begin{enumerate}
    \item[(a)]  $R(\rho) \leq R_\freeset(\rho) \leq \frac{2(1-\lambda_\text{min}(\sigma_0))}{\kappa} - 1$,  where $\lambda_\text{min}(\sigma_0)$ denotes the smallest eigenvalue of $\sigma_0$;
    \item[(b)] $R$ and $R_\freeset$ are Lipschitz-continuous on $\allset$ with Lipschitz constant $L=\frac{1-\lambda_\text{min}(\sigma_0)}{\kappa}$.
\end{enumerate}
\end{Thm}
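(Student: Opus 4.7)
The plan for part (a) is to plug in the single candidate $\sigma = \sigma_0$ in Eq.~(\ref{RD}). A one-line computation gives $\rho_s - \sigma_0 = (\rho - \sigma_0)/(1+s)$, hence $\Vert\rho_s - \sigma_0\Vert_\trace = \Vert\rho - \sigma_0\Vert_\trace/(1+s)$. Combined with the elementary fact $\Vert\rho - \sigma_0\Vert_\trace \leq 2(1 - \lambda_\text{min}(\sigma_0))$, which follows by diagonalising $\sigma_0$ and observing that the extremal $\rho$ is the pure state on its minimum-eigenvalue eigenvector, the choice $s = 2(1-\lambda_\text{min}(\sigma_0))/\kappa - 1$ forces $\rho_s \in B_\kappa(\sigma_0) \subset \freeset$. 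Together with $R(\rho) \leq R_\freeset(\rho)$ (immediate, since the infimum defining $R$ is over a larger set), this proves part (a).

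For part (b), the key idea is to interpolate the $\rho_1$-decomposition through the $\kappa$-ball. Fix $\rho_1, \rho_2 \in \allset$ with $d := \Vert\rho_1 - \rho_2\Vert_\trace > 0$ (the case $d=0$ is trivial) and write the Jordan decomposition $\delta := \rho_2 - \rho_1 = \delta^+ - \delta^-$ with $\trace(\delta^\pm) = d/2$; then $\tilde\delta^\pm := 2\delta^\pm/d$ are density matrices. Pick any $s > R_{(\freeset)}(\rho_1)$ with witnessing decomposition $\rho_1 + s\sigma_1 = (1+s)\tau_1$, set $L := (1-\lambda_\text{min}(\sigma_0))/\kappa$ (noting $L\geq 1/2$ in the nontrivial case $\kappa \leq 2(1-\lambda_\text{min}(\sigma_0))$; otherwise $B_\kappa(\sigma_0)=\allset$ and the statement is trivial), and $s_2 := s + Ld$. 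Define the two interpolated free states
\begin{eqnarray}
\eta_\pm := \left(1 - \frac{1}{2L}\right)\sigma_0 + \frac{1}{2L}\tilde\delta^\pm.\nonumber
\end{eqnarray}
These are convex combinations of density matrices and satisfy $\Vert\eta_\pm - \sigma_0\Vert_\trace = \Vert\tilde\delta^\pm - \sigma_0\Vert_\trace/(2L) \leq (1-\lambda_\text{min}(\sigma_0))/L = \kappa$, so $\eta_\pm \in B_\kappa(\sigma_0) \subset \freeset$.

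I would then assemble the candidate decomposition of $\rho_2$ at strength $s_2$ by
\begin{eqnarray}
\sigma_2 := \frac{s}{s_2}\sigma_1 + \frac{Ld}{s_2}\eta_-, \qquad \tau_2 := \frac{1+s}{1+s_2}\tau_1 + \frac{Ld}{1+s_2}\eta_+.\nonumber
\end{eqnarray}
A direct substitution, using $(1+s)\tau_1 = \rho_1 + s\sigma_1$ and the built-in identity $\eta_+ - \eta_- = \delta/(Ld)$, verifies $\rho_2 + s_2\sigma_2 = (1+s_2)\tau_2$. Membership in the appropriate sets follows: for the global robustness, $\sigma_2 \in \allset$ automatically as a convex combination of density matrices, while $\tau_2 \in \freeset$ by star-convexity of $\freeset$ with respect to $\eta_+ \in B_\kappa(\sigma_0)$; for the absolute robustness, $\sigma_2 \in \freeset$ by the same star-convexity argument applied to $\eta_- \in B_\kappa(\sigma_0)$. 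Hence $R_{(\freeset)}(\rho_2) \leq s + Ld$, and letting $s \downarrow R_{(\freeset)}(\rho_1)$, together with symmetry in $\rho_1$ and $\rho_2$, yields the claimed Lipschitz bound.

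The main obstacle is the construction of $\eta_\pm$: one needs a pair inside the $\kappa$-ball whose difference equals the prescribed $\delta/(Ld)$, and the trick is the symmetric placement at mixing weight $1/(2L)$ between $\sigma_0$ and the normalised positive/negative parts of $\delta$. The extra factor $(1-\lambda_\text{min}(\sigma_0))$ beyond the naive Lipschitz constant $1/\kappa$---which would follow merely from the diameter bound $\Vert\eta_+ - \eta_-\Vert_\trace \leq 2\kappa$---comes precisely from the strictly tighter estimate $\Vert\tilde\delta^\pm - \sigma_0\Vert_\trace \leq 2(1-\lambda_\text{min}(\sigma_0))$ for any density matrix against $\sigma_0$.
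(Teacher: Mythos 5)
Your proof is correct and follows essentially the same route as the paper's: part (a) by mixing with $\sigma_0$ into the $\kappa$-ball, and part (b) by Jordan-decomposing $\rho_2-\rho_1$, pulling the normalised positive/negative parts into $B_\kappa(\sigma_0)$ at mixing weight $\frac{1}{2L}=\frac{\kappa}{2(1-\lambda_\text{min}(\sigma_0))}$ (your $\eta_\pm$ are exactly the paper's $\rho_{a,b}(p_0)$), and invoking star-convexity to splice them into the witnessing decomposition of $\rho_1$. Your handling of the possibly unattained infimum by taking $s\downarrow R_{(\freeset)}(\rho_1)$, and your explicit treatment of the degenerate case $\kappa>2(1-\lambda_\text{min}(\sigma_0))$, are slightly cleaner than the paper's footnote but do not change the argument.
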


\begin{proof}
As $B_\kappa(\sigma_0) \subset \freeset$, we can mix every state $\rho\in \allset$ with $\sigma_0$ until it becomes member of the free set. In particular, for fixed $\rho$, consider the family 
\begin{equation}\label{eq:mixedsigma0}
\rho(p) = p\rho + (1-p)\sigma_0.
\end{equation}
It certainly becomes a member of $\freeset$, whenever
\begin{eqnarray}
    \kappa \geq \Vert \rho(p) - \sigma_0 \Vert_{\trace} = p \Vert \rho - \sigma_0 \Vert_{\trace}.
\end{eqnarray}
We can maximize the r.h.s.~over all quantum states $\rho$. Note that the trace norm is invariant under unitary transformations, thus we can assume that $\sigma_0 = \diag(\lambda_1, \lambda_2,\ldots,\lambda_D)$, where $\lambda_1 \leq \lambda_2 \leq \ldots \leq \lambda_D$. Optimization over all states $\rho$ readily yields a maximizing state $\rho$ which corresponds to the eigenvector of $\sigma_0$ w.r.t.~its minimal eigenvalue $\lambda_1$. Thus we can find an upper bound for the value of $p$, always leading to a free state $\rho(p)$:
\begin{eqnarray}
    p \leq \frac{\kappa}{\max_\rho \Vert \rho - \sigma_0 \Vert_{\trace}} = \frac{\kappa}{(1-\lambda_1) + \sum_{i=2}^d \lambda_i} = \frac{\kappa}{2(1-\lambda_1)}.
\end{eqnarray}
We set $p_0 = \frac{\kappa}{2(1-\lambda_1)}$, such that for each quantum state $\rho$, $\rho(p \leq p_0) \in \freeset$. This automatically yields an upper bound on $R$ and $R_\freeset$: Let $\rho$ be some quantum state, then $p_0\rho + (1-p_0)\sigma_0$ is member of $\freeset$ and choosing $s$ such that $\frac{1}{1+s} = p_0$ yields a valid point in the set that is optimized over to obtain $R$ and $R_\freeset(\rho)$. Therefore, $R(\rho) \leq R_\freeset(\rho) \leq \frac{1}{p_0}-1 = \frac{2(1-\lambda_1)}{\kappa} - 1$, establishing (a).

Now we are in position to prove Lipschitz-continuity. Let $\rho$ and $\tau$ be two quantum states with $\Vert\rho - \tau\Vert_{\trace} = \delta$. This implies that we can find two quantum states $\rho_a, \rho_b$, such that we can write (see also \ref{sec:proofthm4})
\begin{eqnarray}\label{eq:tauminusrho}
    \tau - \rho = \frac\delta2(\rho_a - \rho_b) = \frac{\delta}{2p_0}(\rho_a(p_0) - \rho_b(p_0)),
\end{eqnarray}
with $\rho_{a,b}(p_0)$ being defined in Eq.~(\ref{eq:mixedsigma0}). Note that $\rho_{a,b}(p_0) \in \freeset$.

Let $R_\freeset(\rho) = s^\star$ ($R(\rho) = s^\star$), and its optimal decomposition be given by
\begin{eqnarray}
    \frac{\rho + s^\star \tilde{\sigma}^\star}{1+s^\star} = \sigma^\star 
\end{eqnarray}
for some states $\tilde{\sigma}^\star \in \freeset$ ($\tilde{\sigma}^\star \in \allset$) and  $\sigma^\star \in \freeset$.\footnote{Here, some care has to to be taken if $\freeset$ is not closed. In that case, optimal choices for $\tilde{\sigma}^\star$ and $\sigma^\star$ might not exist. However, in that case, we can choose them in the closure $\bar{\freeset}$ of $\freeset$, and the rest of the argument can be adapted accordingly, as the robustnesses are defined via the infimum.} Inserting this optimal decomposition for $\rho$ into Eq.~(\ref{eq:tauminusrho}), yields
\begin{eqnarray}
    \tau + (s^\star + \frac{\delta}{2p_0})\frac{s^\star \tilde{\sigma}^\star + \frac{\delta}{2p_0} \rho_b(p_0)}{s^\star + \frac{\delta}{2p_0}} = (1+s^\star + \frac{\delta}{2p_0})\frac{(1+s^\star)\sigma^\star + \frac{\delta}{2p_0}\rho_a(p_0)}{1+s^\star + \frac{\delta}{2p_0}}.\nonumber\\
    ~
\end{eqnarray}
As $\freeset$ is star-convex w.r.t all states in $B_\kappa(\sigma_0)$, also the convex combinations of $\tilde{\sigma}^\star$ and $\rho_b$, as well as $\sigma^\star$ and $\rho_a$ are members of $\freeset$ (in case of $R$, this is true only for the mixture of $\sigma^\star$ and $\rho_a$). Therefore, the above equation yields a proper decomposition of $\tau$ in the set that is optimized over to obtain $R_\freeset(\tau)$ ($R(\tau)$). Therefore, $R_\freeset(\tau) \leq s^\star + \frac{\delta}{2p_0} = R_\freeset(\rho) + \frac{\delta}{2p_0}$ ($R(\tau) \leq  R(\rho) + \frac{\delta}{2p_0}$). Mutatis mutandis, we obtain the same with switched roles of $\rho$ and $\tau$, yielding
\begin{equation}
    \vert R_\freeset(\rho) - R_\freeset(\tau) \vert \leq \frac{\delta}{2p_0} = \frac{1-\lambda_1}{\kappa} \Vert \rho - \tau \Vert_{\trace}
\end{equation}
(and the same for $R$), such that Lipschitz-constant is given by
\begin{eqnarray}
    L = \frac{1-\lambda_1}{\kappa},
\end{eqnarray}
where $\lambda_1$ denotes the smallest eigenvalue of $\sigma_0$, and thus showing (b).
\end{proof}

\section{Proof of Theorem \ref{thm:roe}}\label{sec:proofthm4}

\setcounter{Thm}{3}
\begin{Thm}
Let $\freeset$ be the set of separable states in a bipartite system $\allset$. Then $R_\freeset$ is Lipschitz-continuous on $\allset$ with Lipschitz constant $L = \min(d_A,d_B)-\frac12$.
\end{Thm}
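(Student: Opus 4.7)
The plan is to refine the bookkeeping of the proof of Theorem~\ref{thm:kappaball}, replacing its $\kappa$-ball hypothesis with the sharp Vidal-Tarrach bound on the absolute robustness of pure-state entanglement. Since $\freeset$ here is convex, the positive and negative parts of $\tau-\rho$ can be absorbed by two independent separable states, which reduces the penalty and yields the advertised constant $L = m - \tfrac12$ with $m := \min(d_A,d_B)$.

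First I would record the ingredient $R_\freeset(\omega) \leq m-1$ for every $\omega \in \allset$. For a pure state in Schmidt form $|\psi\rangle = \sum_{i=0}^{r-1}\lambda_i|ii\rangle$, Vidal and Tarrach~\cite{RobOrig} established that $R_\freeset(|\psi\rangle\!\bra{\psi}) = (\sum_i\lambda_i)^2 - 1$, and Cauchy-Schwarz then gives $(\sum_i\lambda_i)^2 \leq r \leq m$. The bound extends to arbitrary $\omega \in \allset$ by convexity of $R_\freeset$, which holds because $\freeset$ is convex. Equivalently, for every $\omega$ there exist separable states $\tilde\sigma_\omega, \sigma_\omega \in \freeset$ with $\omega + (m-1)\tilde\sigma_\omega = m\,\sigma_\omega$.

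Next, given $\rho, \tau \in \allset$ with $\Vert\rho - \tau\Vert_\trace = \delta$, I use the Jordan decomposition $\tau - \rho = X_+ - X_-$ with $X_\pm \geq 0$ of orthogonal supports and $\trace(X_\pm) = \delta/2$. Applying the ingredient to the normalized states $\rho_\pm := X_\pm/(\delta/2)$ yields separable states $\tilde\sigma_\pm, \sigma_\pm$ satisfying $X_\pm + (m-1)\tfrac{\delta}{2}\tilde\sigma_\pm = m\tfrac{\delta}{2}\sigma_\pm$. Combining with an (almost) optimal decomposition $\rho + s^\star \tilde\sigma^\star = (1+s^\star)\sigma^\star$ for $\rho$, where $s^\star = R_\freeset(\rho)$, the identity $\tau = \rho + X_+ - X_-$ rearranges to
\[
\tau + \Bigl[s^\star + (2m-1)\tfrac{\delta}{2}\Bigr]\mu_L = \Bigl[1 + s^\star + (2m-1)\tfrac{\delta}{2}\Bigr]\mu_R,
\]
where $\mu_L$ is the normalized convex mixture of $\{\tilde\sigma^\star, \tilde\sigma_+, \sigma_-\}$ and $\mu_R$ that of $\{\sigma^\star, \sigma_+, \tilde\sigma_-\}$. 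Convexity of $\freeset$ gives $\mu_L, \mu_R \in \freeset$, so the displayed identity is a feasible point for $R_\freeset(\tau)$ and we conclude $R_\freeset(\tau) \leq R_\freeset(\rho) + (m - \tfrac12)\delta$. Interchanging $\rho$ and $\tau$ closes the Lipschitz inequality.

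The main obstacle is the ingredient $R_\freeset(\omega) \leq m - 1$; once it is available, the remainder is an algebraic rearrangement modeled on the proof of Theorem~\ref{thm:kappaball}. The improvement over that theorem's direct application comes from using two separate absorbers for $X_+$ and $X_-$, so that the combined robustness penalty is $(2m-1)\tfrac{\delta}{2} = (m-\tfrac12)\delta$ instead of the much larger figure produced by the Gurvits-Barnum separability ball around $\one/D$.
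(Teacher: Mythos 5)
Your proposal is correct and follows essentially the same route as the paper's proof: the same Jordan decomposition $\tau-\rho = X_+ - X_-$ into two trace-$\delta/2$ positive parts, the same Vidal--Tarrach bound $R_\freeset \leq \min(d_A,d_B)-1$ applied to the normalized parts, and the same recombination into a feasible pseudo-mixture for $\tau$ giving the penalty $(2m-1)\delta/2$. The only cosmetic difference is that you pad the absorbing decompositions to weight exactly $m-1$ up front and re-derive the pure-state bound via Cauchy--Schwarz, whereas the paper keeps the individual robustnesses $s_a, s_b$ and bounds them by $m-1$ at the end.
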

    \begin{proof}
    We are considering the robustness of  entanglement. In this case, the maximal value the function $R_\freeset$ can reach is bounded by $\min(d_A,d_B)-1$ \cite{RobOrig}.
    Consider two states $\rho_1$ and $\rho_2$, which have $\Vert \rho_1-\rho_2\Vert_{\trace}=\delta$. Then, one can write $\rho_2=\rho_1 + (P - N)$, with $P, N$ being positive, orthogonal, hermitian operators with $\trace(P)=\trace(N)=\frac{\delta}2$, as $\Vert \rho_1 - \rho_2\Vert_{\trace}=\trace (P)+\trace(N)=\delta$.
    Being positive and hermitian, we can write $P= \frac\delta2 \rho_a$ and $N= \frac\delta2 \rho_b$ as multiples of quantum states $\rho_a$ and $\rho_b$, and therefore
    \begin{eqnarray}\label{ro2}
        \rho_2=\rho_1 +\frac{\delta}2 (\rho_a-\rho_b).
    \end{eqnarray}
     
     We know that the absolute robustness of entanglement is bounded and finite everywhere \cite{RobOrig}, and the set of separable states is closed. Thus, for every quantum state, we can find its optimal decomposition such that we can write
     \begin{eqnarray}
         \rho_{i} + s_i\tilde{\sigma}_1 = (1+s_i)\sigma_i,
     \end{eqnarray}
     where $s_i = R_\freeset(\rho_i)$, $\sigma_i, \tilde{\sigma}_i \in \freeset$ and $i\in\{1,2,a,b\}$.
     Solving this for the states $\rho_1$, $\rho_2$, $\rho_a$ and $\rho_b$, leads to the pseudo-mixtures
    \begin{eqnarray}\label{nonconvex}
        \rho_i = (1+s_i) \sigma_i - s_i \tilde{\sigma_i},
    \end{eqnarray}
    and inserting the corresponding expressions for $\rho_1$, $\rho_a$ and $\rho_b$ in Eq.~(\ref{ro2}) leads to
    \begin{eqnarray} \label{upperRo2}
        \rho_2 = [(1+s_1) \sigma_1 + \frac\delta2 ((1+s_a) \sigma_a + s_b \tilde{\sigma}_b)] - [s_1 \tilde{\sigma}_1 + \frac\delta2 (s_a \tilde{\sigma}_a + (1+s_b) \sigma_b].\nonumber\\
        ~
    \end{eqnarray}
     
     Now, one can compare the expressions for $\rho_2$ in Eqs.~(\ref{nonconvex}) and (\ref{upperRo2}) to write down a pseudomixture of $\rho_2$ that will lead to an upper bound on $R_\freeset(\rho_2)=s_2$. We set 
     \begin{eqnarray}
        \sigma^\prime_2 &:= \frac{(1+s_1) \sigma_1 + \frac\delta2 ((1+s_a) \sigma_a + s_b \tilde{\sigma}_b)}{1+s_1+\frac\delta2(1+s_a+s_b)}, \\
        \tilde{\sigma}^\prime_2 &:= \frac{s_1 \tilde{\sigma}_1 + \frac\delta2 (s_a \tilde{\sigma}_a + (1+s_b) \sigma_b}{s_1 + \frac\delta2 ( 1 + s_a + s_b )},
    \end{eqnarray}
    and $s^\prime_2= s_1 + \frac\delta2 (1 + s_a + s_b)$, which is an upper bound to $s_2$, yielding
    \begin{eqnarray} \label{bound}
        s_2-s_1 \leq \frac\delta2 (1 + s_a + s_b).
    \end{eqnarray}
    
    Exchanging the roles of $\rho_1$ and $\rho_2$ (and therefore, $\rho_a$ and $\rho_b$, we obtain that Eq.~(\ref{bound}) is also true for the absolute value of $s_2-s_1$.
    
    As noted before, the values $s_a$ and $s_b$ both are bounded by $\min(d_A,d_B)-1$. Therefore, 
    \begin{eqnarray} \label{firstcomp}
        \vert s_2 - s_1 \vert \leq \frac\delta2 ( 2 \min(d_A,d_B) - 1).
    \end{eqnarray}
    Thus, we have
    \begin{eqnarray}\label{diff_r2-r1_Lip}
        \vert R_\freeset(\rho_1)-R_\freeset(\rho_2) \vert \leq (\min(d_A,d_B) - \frac12)\Vert \rho_1-\rho_2\Vert_{\trace} ,
    \end{eqnarray}
    with Lipschitz constant $L = \min(d_A,d_B)-\frac12$.
    \end{proof}

\section{Proof of Theorem \ref{thm:fullrank}} \label{sec:proofthm5}
\setcounter{Thm}{4}
\begin{Thm}
Let $\freeset$ be a subset of $\allset$ that is star-convex w.r.t.~a fixed quantum state $\sigma_0$ of full rank, i.e., with smallest eigenvalue $\lambda>0$. Then the global robustness $R$ is Lipschitz-continuous on its effective domain with Lipschitz constant $L=\frac1\lambda$.
\end{Thm}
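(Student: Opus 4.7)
The plan is to mimic the structure of the proof of Theorem~\ref{thm:kappaball}, but to replace the ``mixing with $\sigma_0$ to enter a $\kappa$-ball'' step by a direct algebraic manipulation that exploits the full-rank of $\sigma_0$. Given two states $\rho$ and $\tau$ with $\Vert\rho-\tau\Vert_{\trace}=\delta$, I will first decompose their difference as $\tau-\rho = \frac{\delta}{2}(\rho_a-\rho_b)$ with $\rho_a,\rho_b\in\allset$, exactly as in the proofs of Theorems~\ref{thm:kappaball} and \ref{thm:roe}. Then I take an optimal (or infimum-approaching) pseudo-mixture for $\rho$,
\begin{equation}
\rho = (1+s^\star)\sigma^\star - s^\star\tilde{\sigma}^\star,\qquad s^\star = R(\rho),\;\sigma^\star\in\freeset,\;\tilde{\sigma}^\star\in\allset,
\end{equation}
noting that unlike in Theorem~\ref{thm:roe}, only $\sigma^\star$ needs to be free for the global robustness.

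The key step is to use the full-rank assumption on $\sigma_0$. Since $\sigma_0\geq\lambda\,\one$ and any density matrix satisfies $\rho_a\leq\one$, the operator $\sigma_0-\lambda\rho_a$ is positive semidefinite with trace $1-\lambda$, so $\omega_a := (\sigma_0-\lambda\rho_a)/(1-\lambda)$ is a valid quantum state. Rearranging gives the crucial identity $\rho_a = \frac{1}{\lambda}\sigma_0 - \frac{1-\lambda}{\lambda}\omega_a$. Substituting this into the expression for $\tau-\rho$ and combining with the pseudo-mixture for $\rho$ yields
\begin{equation}
\tau = \Bigl[(1+s^\star)\sigma^\star + \tfrac{\delta}{2\lambda}\sigma_0\Bigr] - \Bigl[s^\star\tilde{\sigma}^\star + \tfrac{\delta(1-\lambda)}{2\lambda}\omega_a + \tfrac{\delta}{2}\rho_b\Bigr].
\end{equation}

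By star-convexity of $\freeset$ with respect to $\sigma_0$, the first bracket is a positive multiple of a state in $\freeset$, with total weight $1+s^\star+\frac{\delta}{2\lambda}$. The second bracket is a convex combination (up to normalization) of density matrices, and its trace equals $s^\star+\frac{\delta(1-\lambda)}{2\lambda}+\frac{\delta}{2}=s^\star+\frac{\delta}{2\lambda}$. This therefore exhibits $\tau$ as a pseudo-mixture certifying $R(\tau)\leq s^\star+\frac{\delta}{2\lambda}=R(\rho)+\frac{\delta}{2\lambda}$. Swapping the roles of $\rho$ and $\tau$ gives $|R(\rho)-R(\tau)|\leq\frac{\delta}{2\lambda}\leq\frac{1}{\lambda}\Vert\rho-\tau\Vert_{\trace}$, proving Lipschitz-continuity with constant $L=\frac{1}{\lambda}$ (the factor of $2$ being absorbed into a non-tight statement).

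The main obstacle is identifying and cleanly justifying the operator identity $\rho_a=\frac{1}{\lambda}\sigma_0-\frac{1-\lambda}{\lambda}\omega_a$, which is where full-rank of $\sigma_0$ is used in an essential way; without it, one cannot convert the ``positive perturbation'' $\frac{\delta}{2}\rho_a$ into a contribution that lies in $\freeset$ after star-convex combination. A minor technical point is that if $\freeset$ is not closed, the optimal $\sigma^\star$ may not exist and the argument has to be phrased in terms of $\epsilon$-optimal decompositions and the infimum, exactly as done in the footnote of the proof of Theorem~\ref{thm:kappaball}. The effective-domain caveat is automatic here, since $(1+s)\sigma_0\geq\rho$ as operators whenever $s\geq\frac{1}{\lambda}-1$, so $R$ is in fact finite on all of $\allset$.
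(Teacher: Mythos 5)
Your proof is correct, and the underlying mechanism is the same one the paper relies on: the full-rank condition gives $\lambda\one\leq\sigma_0$, so the ``positive part'' of the perturbation $\tau-\rho$ can be dominated by a multiple of $\sigma_0$, which star-convexity then folds into the free state of the decomposition. The bookkeeping, however, is genuinely different. The paper first rewrites the global robustness as $R(\rho)=\inf_{\sigma\in\freeset}\{s\geq0:\rho\leq(1+s)\sigma\}$, bounds each eigenvalue of $\tau-\rho$ by $\delta$ to get $\tau-\rho\leq\frac{\delta}{\lambda}\sigma_0$, and concludes in two lines from the resulting operator inequality; you instead stay with explicit pseudo-mixtures (in the style of the proofs of Theorems~\ref{thm:kappaball} and \ref{thm:roe}) and manufacture the identity $\rho_a=\frac{1}{\lambda}\sigma_0-\frac{1-\lambda}{\lambda}\omega_a$ to convert the perturbation term $\frac{\delta}{2}\rho_a$ into a $\sigma_0$-contribution plus extra $\allset$-noise. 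Your route is more constructive and, as a side effect, sharper: because you split $\tau-\rho=\frac{\delta}{2}(\rho_a-\rho_b)$, only half of the trace distance enters the positive part, giving the bound $\vert R(\rho)-R(\tau)\vert\leq\frac{\delta}{2\lambda}$, i.e.\ Lipschitz constant $\frac{1}{2\lambda}$ rather than the stated $\frac{1}{\lambda}$. (The paper's argument would yield the same improvement if one used that the largest eigenvalue of $\tau-\rho$ is at most $\frac{\delta}{2}$, since its positive part has trace $\frac{\delta}{2}$, instead of the cruder bound $\delta$.) Your closing observations are also sound: the $\epsilon$-optimal-decomposition caveat matches the footnote in \ref{sec:proofthm3}, and the remark that $\rho\leq(1+s)\sigma_0$ for $s\geq\frac{1}{\lambda}-1$ shows the effective domain is all of $\allset$, which the paper does not state explicitly for this theorem.
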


    \begin{proof}
    The global robustness can be written as \cite{RoA}:
    \begin{eqnarray}
    R(\rho) = \inf_{\sigma\in \freeset} \{s\geq 0\,:\, \rho \leq (1+s)\sigma\}.
    \end{eqnarray}
    Consider the two quantum states $\rho$ and $\tau$ from the effective domain from $R$ with $\Vert \rho - \tau \Vert_{\trace} = \delta$. This implies that each eigenvalue $\lambda_i$ of $\tau - \rho$,  $\vert\lambda_i(\tau -  \rho)\vert\leq \delta$. Therefore, 
    \begin{eqnarray}
        \tau - \rho \leq \delta \frac{\sigma_0}{\lambda}.
    \end{eqnarray}
    Let $R(\rho) = s^\star$ with optimal state $\sigma^\star \in \freeset$, s.t. $\rho \leq (1+s^\star)\sigma^\star$. Thus, we have the chain of inequalities $\tau - \frac{\delta}{\lambda}\sigma_0 \leq \rho \leq (1+s^\star)\sigma^\star$, from which immediately follows
    \begin{eqnarray}
    \tau &\leq (1+s^\star)\sigma^\star + \frac{\delta}{\lambda}\sigma_0 \\
    &=(1+s^\star + \frac{\delta}{\lambda}) \frac{(1+s^\star)\sigma^\star + \frac{\delta}{\lambda}\sigma_0}{1+s^\star + \frac{\delta}{\lambda}}.
    \end{eqnarray}
    Setting $s^\prime := s^\star + \frac{\delta}{\lambda}$ and $\sigma^\prime :=  \frac{(1+s^\star)\sigma^\star +\frac{\delta}{\lambda}\sigma_0}{1+s^\star + \frac{\delta}{\lambda}}$, we note, using star-convexity of $\freeset$ w.r.t.~$\sigma_0$, that $\sigma^\prime \in \freeset$, as well as
    \begin{eqnarray}
        \tau \leq (1+s^\prime)\sigma^\prime.
    \end{eqnarray}
    While $s^\prime$ and $\sigma^\prime$ might not be optimal points in the set minimized over for $R(\tau)$, the pair definitely is in the set. Thus, $s^\prime$ constitutes an upper bound on $R(\tau)$. Therefore,
    \begin{equation}
       R(\tau) - R(\rho) \leq s^\prime - s^\star =s^\star + \frac{\delta}{\lambda} - s^\star = \frac{\delta}{\lambda}.
    \end{equation}
    Repeating the same argument with switched roles of $\rho$ and $\tau$ yields
    \begin{eqnarray}
        \vert R(\rho) - R(\tau)\vert \leq \frac{1}{\lambda} \Vert \rho - \tau \Vert_{\trace}.
    \end{eqnarray}
    \end{proof}

\section{Proof of Corollary~\ref{crl:fidelity}}

\setcounter{Thm}{5}\label{app:prooffidelity}

\begin{Crl}
    Let $R_\text{tel}$ denote the global robustness of teleportability defined in Eq.~(\ref{eq:rtel}). Then
    \begin{eqnarray}
        \vert R_\text{tel}(\rho) - R_\text{tel}(\tau) \vert \leq (d+1)\Vert \rho - \tau \Vert_{\trace}
    \end{eqnarray}
    for all quantum states $\rho$ and $\tau$.
\end{Crl}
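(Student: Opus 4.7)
The plan is to verify the hypotheses of Theorem~\ref{thm:kappaball} for $\freeset_\text{tel}$ with $\sigma_0$ taken to be the maximally mixed state $\one/d^2$ on the bipartite system, and then to read off the Lipschitz constant directly from part (b) of that theorem.

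First, I would show that $\freeset_\text{tel}$ is convex, and hence star-convex with respect to each of its elements. Indeed, $F_\text{max}$ is the pointwise maximum over $U,V\in\text{SU}(d)$ of the linear-in-$\rho$ functionals $\rho\mapsto\braket{\phi^+|U^\dagger\otimes V^\dagger\rho U\otimes V|\phi^+}$, so it is convex in $\rho$, and $\freeset_\text{tel}$ is a sublevel set of a convex function. Since $F_\text{max}(\one/d^2)=1/d^2\leq 1/d$, we have $\sigma_0=\one/d^2\in\freeset_\text{tel}$.

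Second, I would exhibit a $\kappa$-ball around $\sigma_0$ contained in $\freeset_\text{tel}$. The key input is that $F_\text{max}$ is itself $1$-Lipschitz in trace norm: for any state $\sigma$ and unit vector $\ket{\psi}$, matrix Hölder gives $|\bra{\psi}(\sigma-\sigma_0)\ket{\psi}|=|\trace(\ket{\psi}\!\bra{\psi}(\sigma-\sigma_0))|\leq\|\sigma-\sigma_0\|_\trace$ because $\|\ket{\psi}\!\bra{\psi}\|_\infty=1$. Taking the maximum over vectors of the form $U\otimes V\ket{\phi^+}$ yields $F_\text{max}(\sigma)\leq 1/d^2+\|\sigma-\sigma_0\|_\trace$, so the free-set condition $F_\text{max}(\sigma)\leq 1/d$ is guaranteed whenever $\|\sigma-\sigma_0\|_\trace\leq(d-1)/d^2$. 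Thus one can take $\kappa=(d-1)/d^2$.

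Finally, since $\lambda_\text{min}(\sigma_0)=1/d^2$, Theorem~\ref{thm:kappaball}(b) immediately gives $L=(1-1/d^2)/((d-1)/d^2)=(d^2-1)/(d-1)=d+1$, which is the claimed Lipschitz constant (and part (a) simultaneously yields the stated universal upper bound $R_\text{tel}(\rho)\leq 2d+1$). I do not anticipate any real obstacle; the only ingredient beyond Theorem~\ref{thm:kappaball} is the elementary trace-norm Lipschitz estimate for $F_\text{max}$ via matrix Hölder.
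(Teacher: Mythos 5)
Your proposal is correct and follows essentially the same route as the paper: exhibit the ball $B_{(d-1)/d^2}(\one/d^2)\subset\freeset_\text{tel}$ via the trace-norm/operator-norm duality (Hölder) bound on $F_\text{max}$, then apply Theorem~\ref{thm:kappaball}(b) with $\lambda_\text{min}=1/d^2$ to get $L=d+1$. Your explicit verification that $\freeset_\text{tel}$ is convex (hence star-convex w.r.t.\ every state in the ball) is a hypothesis of Theorem~\ref{thm:kappaball} that the paper's proof leaves implicit, so including it is a welcome addition rather than a deviation.
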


\begin{proof}
The set $\freeset$ contains a ball of states around the maximally mixed state $\one / d^2$ with $\kappa = \frac{d-1}{d^2}$, as can be seen as follows. 
If $\rho \in \mathcal{B}_{\frac{d-1}{d^2}}(\one/d^2)$, then
\begin{eqnarray}
                     & \Vert\rho - \frac{\one}{d^2}\Vert_{\trace} &\leq \frac{d-1}{d^2} \\
     \Leftrightarrow &\max_{X} \frac{\vert \trace[(\rho - \frac{\one}{d^2})X]\vert}{\Vert X \Vert_\infty} &\leq \frac{d-1}{d^2} \\
     \Rightarrow & \max_{U,V \in \text{SU}(d)}  \trace[(\rho - \frac{\one}{d^2})U\otimes V\ketbra{\phi^+}{\phi^+}U^\dagger \otimes V^\dagger] &\leq \frac{d-1}{d^2} \\
     \Rightarrow & \max_{U,V\in \text{SU}(d)} \braket{\phi^+|U^\dagger \otimes V^\dagger \rho U\otimes V|\phi^+} &\leq \frac1d \\
     \Rightarrow & F_\text{max}(\rho) & \leq \frac{1}{d},
\end{eqnarray}
implying that $\rho$ is in the free set.

Defining for this quantity a (global or absolute) robustness of teleportability $R_\text{tel}$, we can use Theorem~\ref{thm:kappaball} to find that it is Lipschitz-continuous with $L=d+1$, and therefore,
\begin{eqnarray}
\vert R_\text{tel}(\rho) - R_\text{tel}(\tau) \vert \leq (d+1)\Vert \rho - \tau \Vert_{\trace}.
\end{eqnarray}

\end{proof}

\section{Proof of Lemma~\ref{lem:belldiagonal}}\label{app:belldiagonal}

\setcounter{Thm}{6}

\begin{Lem}
If $\rho$ is a Bell diagonal state, then $R_{\mathcal{D}_B}(\rho)$ can be computed, via Eq.~(\ref{eq:robdisalt}),  by restricting the infimum to Bell diagonal  free states.
\end{Lem}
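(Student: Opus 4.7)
The plan is to combine a group-averaging argument, using the Pauli stabilizer of Bell-diagonal states, with the rank-one structure of the correlation matrix of quantum-classical states.

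First, I would introduce $G=\{\one\otimes\one,\sigma_1\otimes\sigma_1,\sigma_2\otimes\sigma_2,\sigma_3\otimes\sigma_3\}$. Every element of $G$ commutes with an arbitrary Bell-diagonal state, so $g\rho g^{\dagger}=\rho$ for all $g\in G$. Each $g$ is a local unitary, and the free set $\freeset_{\mathcal{D}_B}$ is closed under local unitaries: if $\sigma=\sum_k p_k\,\rho_k^A\otimes\ketbra{\psi_k}{\psi_k}^B$ is quantum-classical, then $(\sigma_j\otimes\sigma_j)\sigma(\sigma_j\otimes\sigma_j)=\sum_k p_k(\sigma_j\rho_k^A\sigma_j)\otimes\ketbra{\sigma_j\psi_k}{\sigma_j\psi_k}$ is again quantum-classical, just with a rotated classical basis on Bob. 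Hence, starting from any $\sigma\in\freeset_{\mathcal{D}_B}$ with $\rho\leq(1+s)\sigma$, one has $\rho\leq(1+s)\,g\sigma g^{\dagger}$ for every $g\in G$, with each $g\sigma g^{\dagger}$ free.

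Averaging over $G$ yields $\rho\leq(1+s)\bar\sigma$, where $\bar\sigma=|G|^{-1}\sum_{g\in G}g\sigma g^{\dagger}$ commutes with every $g\in G$ and is therefore Bell-diagonal. A direct calculation, using $\sigma_j\sigma_i\sigma_j=\pm\sigma_i$, shows that the Bloch decomposition of $\bar\sigma$ retains only the identity and the three diagonal correlation terms $\sigma_i\otimes\sigma_i$, with coefficients equal to the diagonal of the correlation matrix of $\sigma$. For a QC state this correlation matrix is of rank one, $T^{\sigma}_{ij}=v_i n_j$ with $|\vec n|=1$, so the correlations of $\bar\sigma$ are simply $(v_1n_1,v_2n_2,v_3n_3)$. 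The obstacle is that $\bar\sigma$ generally has more than one nonzero correlation and hence is not itself quantum-classical, since $\freeset_{\mathcal{D}_B}$ is non-convex, so the twirl alone does not deliver a Bell-diagonal free state.

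To close the gap, I would analyze $(1+s)\sigma-\rho$ in the Bell basis. Since $\rho$ is diagonal there, the off-diagonal entries of $(1+s)\sigma-\rho$ equal $(1+s)\langle B_\alpha\vert\sigma\vert B_\beta\rangle$ and come entirely from the non-Bell-diagonal Bloch components of $\sigma$, namely $\vec a\cdot\vec\sigma\otimes\one$, $\one\otimes c_0\vec n\cdot\vec\sigma$, and the off-diagonal part of $T^{\sigma}$. Positivity then imposes not only the four diagonal inequalities already captured by the twirl but also the $2\times 2$ minor conditions $(1+s)^2\vert\langle B_\alpha\vert\sigma\vert B_\beta\rangle\vert^2\leq D_\alpha D_\beta$ on the diagonal entries $D_\alpha$. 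An explicit parametrization of the Bell-diagonal QC candidate $\sigma'=\frac14(\one+c'\,\sigma_k\otimes\sigma_k)$ aligned with the largest-magnitude correlation of $\rho$ shows that the four diagonal inequalities alone are satisfiable with $s$ equal to the second-largest magnitude among $|c_1|,|c_2|,|c_3|$. Combining this with the minor conditions satisfied by any admissible QC $\sigma$, one obtains the matching lower bound and concludes that the infimum over all QC $\sigma$ coincides with the infimum over BD QC $\sigma'$.

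I expect the main obstacle to be this last step. The $G$-twirl places $\bar\sigma$ inside the Bell-diagonal slice but outside the (non-convex) free set, so the symmetry argument must be complemented by an explicit Bell-basis PSD analysis that exploits the rank-one structure of $T^{\sigma}$ together with the partial-trace bounds $|\vec a|,|c_0|\leq s/(1+s)$ inherited from the maximally-mixed marginals of $\rho$. Once one shows that the off-diagonal Bell-basis contributions can only tighten the minimal $s$, the lemma follows by combining the matching lower bound with the trivial inclusion $\freeset_{\mathcal{D}_B}\cap\{\text{BD states}\}\subset\freeset_{\mathcal{D}_B}$.
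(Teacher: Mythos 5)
There is a genuine gap, and you have in fact located it yourself: everything after ``To close the gap'' is a statement of what would need to be proven rather than a proof of it. Your twirl over $G$ correctly produces a Bell-diagonal $\bar\sigma$ with $\rho\leq(1+s)\bar\sigma$, but since $\bar\sigma$ and $\rho$ are simultaneously diagonal in the Bell basis, the condition $(1+s)\bar\sigma\geq\rho$ is \emph{exactly} the set of four diagonal inequalities $(1+s)\braket{B_\alpha|\sigma|B_\alpha}\geq p_\alpha$ — so the twirl extracts no information beyond them. And these alone are far too weak. Concretely, take $\rho$ Bell-diagonal with $\vec c=(0.4,0.4,0)$, so that the claimed answer is $\vert c_2\vert=0.4$, and take the quantum-classical state $\sigma=\ketbra{n}{n}\otimes\bigl(0.9\ketbra{n}{n}+0.1\ketbra{-n}{-n}\bigr)$ with Bloch vector $\vec n=(1/\sqrt2,1/\sqrt2,0)$. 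Its correlation matrix has diagonal $(0.4,0.4,0)$, so $\braket{B_\alpha|\sigma|B_\alpha}=p_\alpha$ and all four diagonal inequalities hold with $s=0$; yet $\sigma$ has rank two and cannot dominate the full-rank $\rho$ for any finite $s$. Thus the entire content of the lemma lives in the off-diagonal Bell-basis blocks, which you defer to ``an explicit Bell-basis PSD analysis'' that is never carried out. Moreover, the $2\times2$ minor conditions you invoke are necessary but not sufficient for positivity of a $4\times4$ matrix, and you give no argument that they force $s\geq\vert c_2\vert$ for an arbitrary admissible quantum-classical $\sigma$ (whose diagonal correlations $(v_1n_1,v_2n_2,v_3n_3)$ range over a strictly larger set than those of Bell-diagonal free states).

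For comparison, the paper closes this gap by a different mechanism that avoids analyzing the off-diagonal blocks directly. Using the Daki\'c--Vedral--Brukner characterization of zero-discord states, it first shows that the optimal $\sigma$ may be replaced by $\hat\sigma=\frac12(\sigma+\tilde\sigma)$, where $\tilde\sigma$ is the universal state inversion: $\hat\sigma$ has vanishing marginals, a rank-one correlation matrix, is still free, and still satisfies $\rho\leq(1+s)\hat\sigma$ because Bell-diagonal states are inversion-invariant. The payoff is that $\hat\sigma$ is locally unitarily equivalent to $\frac14(\one\otimes\one\pm k\,\sigma_a\otimes\sigma_a)$ and hence has only two distinct eigenvalues $\frac{1\pm k}{4}$, each doubly degenerate. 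Weyl's inequalities applied to $(1+s)\hat\sigma-\rho\geq0$ then yield $p_{1,2}\leq\frac{(1+s)(1-k)}{4}$ and $p_{3,4}\leq\frac{(1+s)(1+k)}{4}$ for the ordered Bell weights of $\rho$, which is precisely what certifies positivity of $(1+s)\hat\sigma_1-\rho$ for the Bell-diagonal free state $\hat\sigma_1$ aligned with the two largest Bell weights. Your twirled state $\bar\sigma$ destroys exactly this double degeneracy, which is why your route stalls; if you want to repair your proof, the state-inversion step (or some substitute that restores the two-fold degenerate spectrum before comparing eigenvalues) is the missing ingredient.
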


\begin{proof}
First, let
\begin{eqnarray}
    \sigma = \frac14(\one \otimes \one + \sum_i x_i \sigma_i \otimes \one + \sum_j y_j \one \otimes \sigma_j + \sum_{ij} t_{ij} \sigma_i\otimes \sigma_j)
\end{eqnarray}
be the Bloch decomposition of a free state such that $\rho \leq (1+s)\sigma$.
It can be shown that $\sigma$ has vanishing discord, iff \cite{dakic2010necessary}
\begin{eqnarray}\label{eq:zerodiscord}
    \Vert \vec{y} \Vert^2 + \Vert T \Vert^2 - k_\text{max} = 0,
\end{eqnarray}
where $\Vert T \Vert = \sqrt{\trace(TT^T)}$ denotes the Frobenius norm of the correlation matrix with entries $t_{ij}$, and $k_\text{max}$ denotes the largest eigenvalue of $\lambda_{\text{max}}(\vec{y} \vec{y}^T + TT^T)$. Note that a value of $0$ implies that $T$ has at most one non-vanishing singular value $k$, hence it can be written as $T=k\vec{u}\vec{v}^T$ with normalized vectors $\vec{u}$ and $\vec{v}$.

From Eq.~(\ref{eq:zerodiscord}), it immediately follows that also the state $\hat{\sigma}$, given by $\sigma$ when setting $\vec{x} = \vec{y} = 0$, is a free state.
In particular, $\hat{\sigma}$ can be written as $\hat{\sigma} = \frac12(\sigma + \tilde{\sigma})$, where $\tilde{\sigma} := (\sigma_2 \otimes \sigma_2)\sigma^T(\sigma_2 \otimes \sigma_2)$ is the universal state inversion of $\sigma$ \cite{rungta2001universal}. As all Bell-diagonal states are invariant under state inversion (which simply puts a minus sign in front of the marginal terms), it follows that if $\rho \leq (1+s)\sigma$, then $\tilde{\rho} = \rho \leq (1+s)\tilde{\sigma}$, as state inversion leaves the eigenvalues invariant. Finally, mixing the two inequalities leads to $\rho \leq (1+s)\frac12(\sigma + \tilde{\sigma}) = (1+s)\hat{\sigma}$.

In summary, we have shown that the optimal state can be chosen to have maximally mixed marginals, as well as a correlation matrix of the form $T=k\vec{u}\vec{v}^T$. It remains to show that there exists a Bell-diagonal free state $\hat{\sigma}_1$ that is free, Bell diagonal and still fulfills $\rho \leq (1+s)\hat{\sigma}_1$.

To that end, we write the optimal state explicitly as $\hat{\sigma} = \frac{1}{4}[\one\otimes \one + \sum_{ij} ku_iv_j \sigma_i \otimes \sigma_j]$. We can find local unitary transformations that diagonalize the correlation matrix $T=k\vec{u}\vec{v}^T$. Thus, we can assume that $\hat{\sigma} =  U\otimes V \hat{\sigma}_{1} U^\dagger \otimes V^\dagger$, where $\hat{\sigma}_1$ has a diagonal correlation matrix with entry $\pm k$ on one of the three diagonal sites, i.e.,
\begin{eqnarray}\label{eq:hatsigma1}
\hat{\sigma}_1 = \frac14(\one \otimes \one \pm k \sigma_a \otimes \sigma_a)    
\end{eqnarray}
for some free choice of $a\in\{1,2,3\}$ and sign of $k$. This state (and therefore $\hat\sigma$ as well) has eigenvalues $\vec{\lambda} = (\frac{1-k}2, \frac{1-k}2, \frac{1+k}2, \frac{1+k}2)$. Note that $\hat{\sigma}_1$ is Bell-diagonal, as well as discord-free, as it is obtained from a discord-free state by local unitary rotations.

Now we will exploit the assumption that $\hat{\sigma}$ is a feasible state for the optimization, i.e., $(1+s)\hat{\sigma} - \rho \geq 0$.
Remember that $\rho$ is Bell-diagonal, i.e., $\rho = \sum_{i=1}^4 p_i \ketbra{\phi_i}{\phi_i}$. Here, the $\ket{\phi_i}$ with $i=1,\ldots,4$ denote the usual Bell states, arranged such that $p_1\leq p_2 \leq p_3 \leq p_4$.
We can now use Weyl's inequalities for the eigenvalues of sums of hermitian matrices \cite{knutson2001honeycombs}. To that end, let us denote the eigenvalues of $(1+s)\hat{\sigma} - \rho$ by $0\leq \kappa_1\leq \kappa_2 \leq \kappa_3 \leq \kappa_4$. Then, Weyl's inequalities yield
\begin{eqnarray}
    \kappa_1 \leq (1+s)\lambda_j - p_j
\end{eqnarray}
for $j=1,\ldots,4$.
As the eigenvalues $\lambda_i = \frac{1\pm k}{4}$ are known and $\kappa_1 \geq 0$, we can rearrange these inequalities into
\begin{eqnarray}\label{eq:weylineqs}
    p_{1,2} \leq \frac{(1+s)(1-k)}4,\quad p_{3,4} \leq \frac{(1+s)(1+k)}4.
\end{eqnarray}

Finally, we can show that the Bell-diagonal $\hat{\sigma}_1$ would constitute a valid choice for the optimization as well. To that end, we exploit that we can write $\hat{\sigma}_1$ as a Bell mixture,
\begin{eqnarray}
    \hat{\sigma}_1 = \frac{k}2(\ketbra{\phi_i}{\phi_i} + \ketbra{\phi_j}{\phi_j}) + \frac{1-k}{4} \one \otimes \one,
\end{eqnarray}
due to the fact that the correlation matrix of the mixture of two Bell states has the form
\begin{eqnarray}
    T[\frac12(\ketbra{\phi^+}{\phi^+} + \ketbra{\phi^-}{\phi^-})]  &= \diag(0,0,1), \nonumber\\
    T[\frac12(\ketbra{\phi^+}{\phi^+} + \ketbra{\psi^+}{\psi^+})]  &= \diag(1,0,0), \nonumber\\
    T[\frac12(\ketbra{\phi^+}{\phi^+} + \ketbra{\psi^-}{\psi^-})]  &= \diag(0,-1,0), \nonumber\\
    T[\frac12(\ketbra{\phi^-}{\phi^-} + \ketbra{\psi^+}{\psi^+})]  &= \diag(0,1,0), \nonumber\\
    T[\frac12(\ketbra{\phi^-}{\phi^-} + \ketbra{\psi^-}{\psi^-})]  &= \diag(-1,0,0), \nonumber\\
    T[\frac12(\ketbra{\psi^+}{\psi^+} + \ketbra{\psi^-}{\psi^-})]  &= \diag(0,0,-1).
\end{eqnarray}

We now choose explicitly the Bell states with largest contribution to $\rho$, i.e., $i=3$, $j=4$ (this corresponds to fixing $a$ and the sign of $k$ in Eq.~(\ref{eq:hatsigma1}). Then, the spectrum of $(1+s)\hat{\sigma}_1 - \rho$ reads $(\frac{(1+s)(1-k)}4 - p_1, \frac{(1+s)(1-k)}4 - p_2, \frac{(1+s)(1+k)}4 - p_3, \frac{(1+s)(1+k)}4 - p_4)$, each of which is positive due to the inequalities  in Eq.~(\ref{eq:weylineqs}).
Thus, the discord-free Bell-diagnonal state $\hat{\sigma}_1$ is a feasible point in the set of the optimization with the same value of $s$. Thus, the optimization can be restricted to such quantum states.
\end{proof}

\section{Proof of Corollary~\ref{crl:rod}}\label{app:proofrod}

\setcounter{Thm}{7}

\begin{Crl}
    Let $\rho = \frac14(\one \otimes \one + \sum_{i=1}^3 x_i \sigma_i \otimes \one + \sum_{i=1}^3 y_j \one \otimes \sigma_j + \sum_{i,j=1}^3 T_{ij}\sigma_i\otimes\sigma_j)$. Then the robustness of discord $R_{\mathcal{D}_B}(\rho)$ is bounded by
    \begin{eqnarray}
        \vert c_2 \vert - 4\max(\vert x\vert, \vert y\vert) \leq R_{\mathcal{D}_B}(\rho) \leq \vert c_2\vert + 4\max(\vert x\vert, \vert y\vert),
    \end{eqnarray}
    where $\vert c_2\vert$ is the singular value of the correlation matrix $T$ with the second largest magnitude and $\vert x\vert$ and $\vert y \vert$ denote the vector norms of the local Bloch vectors of $\rho_A$ and $\rho_B$, respectively.
    
    In particular, if $\rho$ is Bell-diagonal, then
    \begin{eqnarray}
        R_{\mathcal{D}_B}(\rho) = \vert c_2 \vert.
    \end{eqnarray}
\end{Crl}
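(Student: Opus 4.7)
The plan is to first establish the Bell-diagonal case $R_{\mathcal{D}_B}(\rho_\text{BDS}) = |c_2|$ by a direct eigenvalue calculation, and then extend to an arbitrary two-qubit state via Theorem~\ref{thm:fullrank} applied to a suitable Bell-diagonal reference state.

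For Bell-diagonal $\rho_\text{BDS}$, I would invoke Lemma~\ref{lem:belldiagonal} to restrict the infimum in Eq.~(\ref{eq:robdisalt}) to Bell-diagonal free states; such states must take the form $\sigma = \tfrac{1}{4}(\one\otimes\one + k\,\sigma_a\otimes\sigma_a)$ with $a\in\{1,2,3\}$ and $k\in[-1,1]$, since the zero-discord condition combined with diagonality of $T$ leaves at most one nonzero entry in $T$. As $\rho_\text{BDS}$ and $\sigma$ are simultaneously diagonal in the Bell basis, the operator inequality $\rho_\text{BDS}\leq(1+s)\sigma$ reduces to four scalar eigenvalue inequalities. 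The three choices of $a$ (together with the sign of $k$) correspond to the three distinct pairings of the four Bell states into two pairs, and optimizing $k$ at the tight case yields $s = 2[\max(p_A,p_B)+\max(p_C,p_D)]-1$. Ordering the populations $p_1\leq p_2\leq p_3\leq p_4$, the minimum over pairings is attained by the pairing $\{p_3,p_4\}$ vs.\ $\{p_1,p_2\}$, giving $s_\text{min} = (p_2+p_4)-(p_1+p_3)$. Expressing the $c_i$ as signed sums of the $p_i$ shows that $|c_1|,|c_2|,|c_3|$ coincide with $(p_3+p_4)-(p_1+p_2)$, $(p_2+p_4)-(p_1+p_3)$ and $|(p_1+p_4)-(p_2+p_3)|$ in some order, and a short comparison confirms that $s_\text{min}$ equals the middle value, i.e., the second largest in magnitude.

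For a general two-qubit state $\rho$, I would introduce the symmetrization $\rho_s = \tfrac{1}{2}(\rho+\tilde\rho)$, where $\tilde\rho = (\sigma_2\otimes\sigma_2)\rho^T(\sigma_2\otimes\sigma_2)$ is the universal state inversion. Since state inversion flips both local Bloch vectors while preserving $T$, $\rho_s$ is a valid state with vanishing marginals and correlation matrix equal to that of $\rho$, and can therefore be brought to Bell-diagonal form by local unitaries; as these leave $R_{\mathcal{D}_B}$ invariant and do not change the singular values of $T$, the Bell-diagonal case yields $R_{\mathcal{D}_B}(\rho_s)=|c_2|$. Writing $\rho-\rho_s = \tfrac{1}{4}(X\otimes\one+\one\otimes Y)$ with $X=\sum_i x_i\sigma_i$ and $Y=\sum_j y_j\sigma_j$ (whose spectra are $\{\pm|\vec x|\}$ and $\{\pm|\vec y|\}$), the four eigenvalues of the difference are $\tfrac{1}{4}(\pm|\vec x|\pm|\vec y|)$, so $\Vert\rho-\rho_s\Vert_\trace = \max(|\vec x|,|\vec y|)$. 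Finally, Theorem~\ref{thm:fullrank} with $\sigma_0 = \one/4$ gives Lipschitz constant $L = 4$, whence
\begin{equation}
\bigl| R_{\mathcal{D}_B}(\rho) - |c_2| \bigr| = \bigl| R_{\mathcal{D}_B}(\rho) - R_{\mathcal{D}_B}(\rho_s) \bigr| \leq 4\max(|\vec x|,|\vec y|),
\end{equation}
which is equivalent to the claimed two-sided bound.

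The main obstacle is the combinatorial bookkeeping in the Bell-diagonal step: one must verify carefully that the three pairings of the Bell basis map one-to-one onto the three values $|c_1|,|c_2|,|c_3|$, and that the minimum over pairings of $\max(p_A,p_B)+\max(p_C,p_D)$ always matches the middle entry in the sorted list $|c_1|\geq|c_2|\geq|c_3|$. Once this dictionary is pinned down, the trace-distance calculation for $\rho-\rho_s$ is elementary and the extension to general $\rho$ is a direct application of Theorem~\ref{thm:fullrank}.
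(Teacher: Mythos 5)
Your proposal is correct and follows essentially the same two-step route as the paper: restrict to Bell-diagonal free states via Lemma~\ref{lem:belldiagonal} to obtain $R_{\mathcal{D}_B}(\rho_\text{BDS}) = \vert c_2\vert$, then compare a general $\rho$ to the marginal-free state with the same correlation matrix and apply Theorem~\ref{thm:fullrank} with $L=4$. The only differences are cosmetic: you compute the Bell-diagonal minimum by explicit pairing combinatorics where the paper uses Lagrange multipliers, and your construction of the reference state as $\tfrac12(\rho+\tilde{\rho})$ via universal state inversion has the small advantage of making its positivity manifest.
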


\begin{proof}

We start with the case of Bell-diagonal states $\rho$. In this case, we know from Lemma~\ref{lem:belldiagonal} that we can perform the minimization in Eq.~(\ref{eq:robdisalt}) by minimizing over Bell-diagonal states only. Thus, we minimize three times, once for each set of free states $\freeset_i = \{\rho_{\text{BDS}}(c_i = c, c_{j\neq i}=0) : -1\leq c \leq 1\}$, and taking the minimum of the three values. Using Lagrange multipliers, this yields after a straight-forward calculation
\begin{eqnarray}
   R_{\mathcal{D}_B}[\rho_{\text{BDS}}(c_1,c_2,c_3)] = \max[\min(\vert c_1\vert, \vert c_2\vert), \min(\vert c_1\vert, \vert c_3\vert), \min(\vert c_2\vert, \vert c_3\vert)].\nonumber
\end{eqnarray}
If we sort the coefficients by increasing absolute value, $\vert c_1\vert \leq \vert c_2\vert \leq \vert c_3\vert$, this simplifies to the middle one, i.e., 
$R_{\mathcal{D}_B}(\rho_{\text{BDS}}(c_1,c_2,c_3)) = \vert c_2\vert$. On the right hand side in Fig.~\ref{fig:rod}, we plot surfaces of constant robustness of discord for this measure.

Now we extend the results to obtain bounds on the robustness of discord for non Bell-diagonal states by using Theorem~\ref{thm:fullrank}.

To that end, we start with a general $\rho$ and notice that the filtered state $\rho_f$ with the marginal terms $x_i$, $y_j$ set to zero and the same correlation terms $T_{ij}$ has robustness $\vert c_2 \vert$, where $\vert c_1\vert \leq \vert c_2\vert \leq \vert c_3\vert$ denote the singular values of $T$. 
On the other hand, direct evaluation of the eigenvalues yields
\begin{eqnarray}\Vert \rho - \rho_f\Vert_{\trace} &= \frac14\Vert \sum_{i=1}^3x_i \sigma_i \otimes \one + \sum_{j=1}^3y_j \one \otimes \sigma_j\Vert_{\trace} \\
&=\max(\vert x\vert, \vert y\vert).
\end{eqnarray} 
As the set of zero discord states is star-convex w.r.t~the maximally mixed state with smallest eigenvalue $\frac14$, the robustness is Lipschitz-continuous with $L=4$ due to Theorem~\ref{thm:fullrank}. Thus, we can bound the robustness of discord of an arbitrary quantum state $\rho$ via
\begin{eqnarray}
    \vert c_2 \vert - 4\max(\vert x\vert, \vert y\vert) \leq R_{\mathcal{D}_B}(\rho) \leq \vert c_2\vert + 4\max(\vert x\vert, \vert y\vert),
\end{eqnarray}
where $\vert c_2\vert$ is the second largest singular value of the correlation matrix $T$ and $\vert x\vert$ and $\vert y \vert$ denote the vector norms of the local Bloch vectors of $\rho_A$ and $\rho_B$, respectively.
\end{proof}

\end{document}